\newcommand{\black}[1]{\textcolor{black}{#1}}
\newtheorem{theorem}{Theorem}
\newtheorem{lemma}{Lemma}
\newtheorem{corollary}{Corollary}
\newcommand{\Amat}{\bm{A}}
\newcommand{\Stab}{\mathrm{Stab}}
\newcommand{\STAB}{\mathrm{STAB}}
\newcommand{\Fp}[1]{\mathbb{F}_{#1}}
\newcommand{\qBinom}[2]{\genfrac{[}{]}{0pt}{}{#1}{#2}}
\newcommand{\stnorm}[1]{\norm{#1}_{\mathrm{st}}}
\newcommand{\Cpp}{C\nolinebreak[4]\hspace{-.05em}\raisebox{.4ex}{\relsize{-2}{\textbf{++}}}}
\newcommand{\defeq}{\vcentcolon=}
\newcommand{\relmiddle}[1]{\mathrel{}\middle#1\mathrel{}}
\NewDocumentCommand{\definealphabet}{mmmm}{
\int_step_inline:nnn{`#3}{`#4}{
\cs_new_protected:cpx{#1 \char_generate:nn{##1}{11}}{
\exp_not:N #2{\char_generate:nn{##1}{11}}}}}
\begin{document}
% --------------------  TITLE  --------------------
\title{Handbook for Quantifying Robustness of Magic}

% ------------  AUTHORS AND AFFILIATIONS ----------
\author{Hiroki Hamaguchi}
\email{hamaguchi-hiroki0510@g.ecc.u-tokyo.ac.jp}
\affiliation{Graduate School of Information Science and Technology, University of Tokyo, Tokyo, 7-3-1 Hongo, Bunkyo-ku, Tokyo 113-8656, Japan}

\author{Kou Hamada}
\email{zkouaaa@g.ecc.u-tokyo.ac.jp}
\affiliation{Graduate School of Information Science and Technology, University of Tokyo, Tokyo, 7-3-1 Hongo, Bunkyo-ku, Tokyo 113-8656, Japan}

\author{Nobuyuki Yoshioka}
\email{nyoshioka@ap.t.u-tokyo.ac.jp}
\affiliation{Department of Applied Physics, University of Tokyo, 7-3-1 Hongo, Bunkyo-ku, Tokyo 113-8656, Japan}
\affiliation{Theoretical Quantum Physics Laboratory, RIKEN Cluster for Pioneering Research (CPR), Wako-shi, Saitama 351-0198, Japan}
\affiliation{JST, PRESTO, 4-1-8 Honcho, Kawaguchi, Saitama, 332-0012, Japan}

% --------------------  ABSTRACT  --------------------

\begin{abstract}
Nonstabilizerness, or magic, is an essential quantum resource for performing universal quantum computation. Specifically, Robustness of Magic (RoM) characterizes the usefulness of a given quantum state for non-Clifford operations. While the mathematical formalism of RoM is concise, determining RoM in practice is highly challenging due to the necessity of dealing with a super-exponential number of stabilizer states.
In this work, we present novel algorithms to compute RoM. The key technique is a subroutine for calculating overlaps between all stabilizer states and a target state, with the following remarkable features: (i) the time complexity per state is reduced \textit{exponentially}, and (ii) the total space complexity is reduced \textit{super-exponentially}. Based on this subroutine, we present algorithms to compute RoM for arbitrary states up to $n = 8$ qubits, whereas the naive method requires a memory size of at least $\SI{86}{\pebi\byte}$, which is infeasible for any current classical computer.
Additionally, the proposed subroutine enables the computation of stabilizer fidelity for a mixed state up to $n = 8$ qubits. We further propose novel algorithms that exploit prior knowledge of the structure of the target quantum state, such as permutation symmetry or disentanglement, and numerically demonstrate our results for copies of magic states and partially disentangled quantum states.
This series of algorithms constitutes a comprehensive ``handbook'' for scaling up the computation of RoM. We envision that the proposed technique will also apply to the computation of other quantum resource measures.
\end{abstract}
\maketitle

\section{Introduction}\label{sec:intro}

Universal fault-tolerant quantum computation is often formulated such that the elementary gates consist of both classically simulatable gates and costful gates, such as in the most well-known Clifford+$T$ formalism of the magic state model~\cite{gottesmanHeisenbergRepresentationQuantum1998, nielsenQuantumComputationQuantum2010, bravyiUniversalQuantumComputation2005, litinskiGameSurfaceCodes2019, horsmanSurfaceCodeQuantum2012, fowlerLowOverheadQuantum2019,
Veitch_2012}.
Since the consumption of the non-Clifford gates is indispensable for any quantum advantage~\cite{gidneyHowFactor20482021,leeEvenMoreEfficient2021, vonburgQuantumComputingEnhanced2021, yoshiokaHuntingQuantumclassicalCrossover2023}, there is a surging need to evaluate the complexity of quantum circuits using the framework of resource theory, in order to explore the boundary of quantum and classical computers~\cite{bravyiTradingClassicalQuantum2016, tirritoQuantifyingNonstabilizernessEntanglement2023, hahnQuantifyingQubitMagic2022, haugEfficientStabilizerEntropies2023, seddonQuantifyingQuantumSpeedups2021, liuManybodyQuantumMagic2022, leoneStabilizerEnyiEntropy2022}.
One such attempt is the proposal of a quantity called Robustness of Magic (RoM) by Howard and Campbell~\cite{howardApplicationResourceTheory2017a,
pashayanEstimatingOutcomeProbabilities2015,
heinrichRobustnessMagicSymmetries2019}; RoM characterizes the classical simulation overhead or complexity of a given quantum state based on its effective amount of magic, and it geometrically quantifies the distance from the convex set of stabilizer states.
\black{More advanced resource measures have been proposed for quantifying the classical simulation cost, such as the stabilizer rank/extent~\cite{bravyiTradingClassicalQuantum2016,bravyiSimulationQuantumCircuits2019} and dyadic negativity~\cite{seddonQuantifyingQuantumSpeedups2021}, while RoM remains one of the most insightful measures in the context of gate synthesis. Together with the fact that the mathematical formulation of RoM is among the simplest and, hence, insightful for building large-scale computation techniques for other monotones as well, we focus on the computation of RoM in this work.}

Although we can reduce the computation of RoM to a simple $L^1$ norm minimization problem, since the set of stabilizer states grows super-exponentially with the number of qubits $n$, the time and space complexity become prohibitively large, making computations for $n > 5$ qubits extremely challenging. For instance, memory consumption explodes to $\SI{86}{\pebi\byte}$ already for an $n = 8$ qubit system. Existing works have attempted to mitigate this burden; for example, Heinrich et al.~proposed leveraging the symmetry of the target state~\cite{heinrichRobustnessMagicSymmetries2019}. By exploiting the permutation symmetry between copies of identical states and internal (or local) symmetry, they demonstrated that up to $n = 26$ qubits RoM can be computed for symmetric magic states such as $\ket{H}^{\otimes n}$ used for $T$-gates. However, when investigating the magic resource of noisy states, there has been no valid method to scale up the resource characterization.

In this work, we propose a systematic procedure, presented in Table~\ref{table:solutions} and Fig.~\ref{fig:flow_chart}, to compute RoM that overcomes the limitations of existing methods. Central to our approach is a subroutine that computes the overlaps of a given quantum state with all stabilizer states, featuring (i) exponentially faster time complexity per state and (ii) super-exponentially smaller space complexity in total. Utilizing this subroutine, we present algorithms that surpass the current state-of-the-art RoM calculation results for arbitrary states up to $n = 8$ qubits. Furthermore, we extend the capability of these methods by incorporating preknowledge of the target quantum state's structure, such as permutation symmetry and decoupled structure, demonstrating that we can approximate RoM for multiple copies of arbitrary single-qubit states up to $n = 17$ qubits.

The remainder of this work is organized as follows.
In Sec.~\ref{sec:preliminary}, we present the preliminaries regarding the formalism of RoM.
In Sec.~\ref{sec:scale_up}, we first give the main subroutine on the overlap calculation in Theorem~\ref{thm:inner_product} and then present the algorithms that compute RoM with the reduced computational resource by utilizing the information of overlaps.
In Sec.~\ref{sec:application}, we present algorithms for practical target states that are decoupled from each other, such as the multiple copies of single-qubit states or tensor products over subsystems.
Finally, in Sec.~\ref{sec:discussion}, we provide the discussion and future perspective of our work.

\begin{table*}[b]
    \centering
    \begin{tabular}{|c|c|c|c|}
    \hline 
    Method & Target & Qubit count & Exact/Approximate \\ \hline \hline
    Naive LP~\cite{howardApplicationResourceTheory2017a}  & Arbitrary & $n \leq 5$  & Exact \\
    Column Generation (CG)    & Arbitrary & $n \leq 8$  & Exact\\
    Minimal Feasible Solution    & Arbitrary & $n \leq 14$  & $2^n$-approximation\\
    \hline
    Symmetry Reduction & $\rho^{\otimes n}$ & $n \leq 17$ & Exact up to $n \leq 7$ \\ 
    Partition Optimization & $\bigotimes_{i} \rho_i$ & $n \leq 15$ & Approximation                \\
    Symmetry Reduction~\cite{heinrichRobustnessMagicSymmetries2019}& $\rho_{H, F}^{\otimes n}$ & $n \leq 26$ & Exact up to $n \leq 9, 10$ \\ \hline
    \end{tabular}
    \caption{Methods for calculating RoM. We can apply the top three methods to arbitrary $n$-qubit states, whereas the subsequent three assume specific structures such as permutation symmetry, decoupled structure, and local symmetry, respectively.}
    \label{table:solutions}
\end{table*}
\begin{figure}[tb]
    \centering
    \includegraphics[width=\columnwidth]{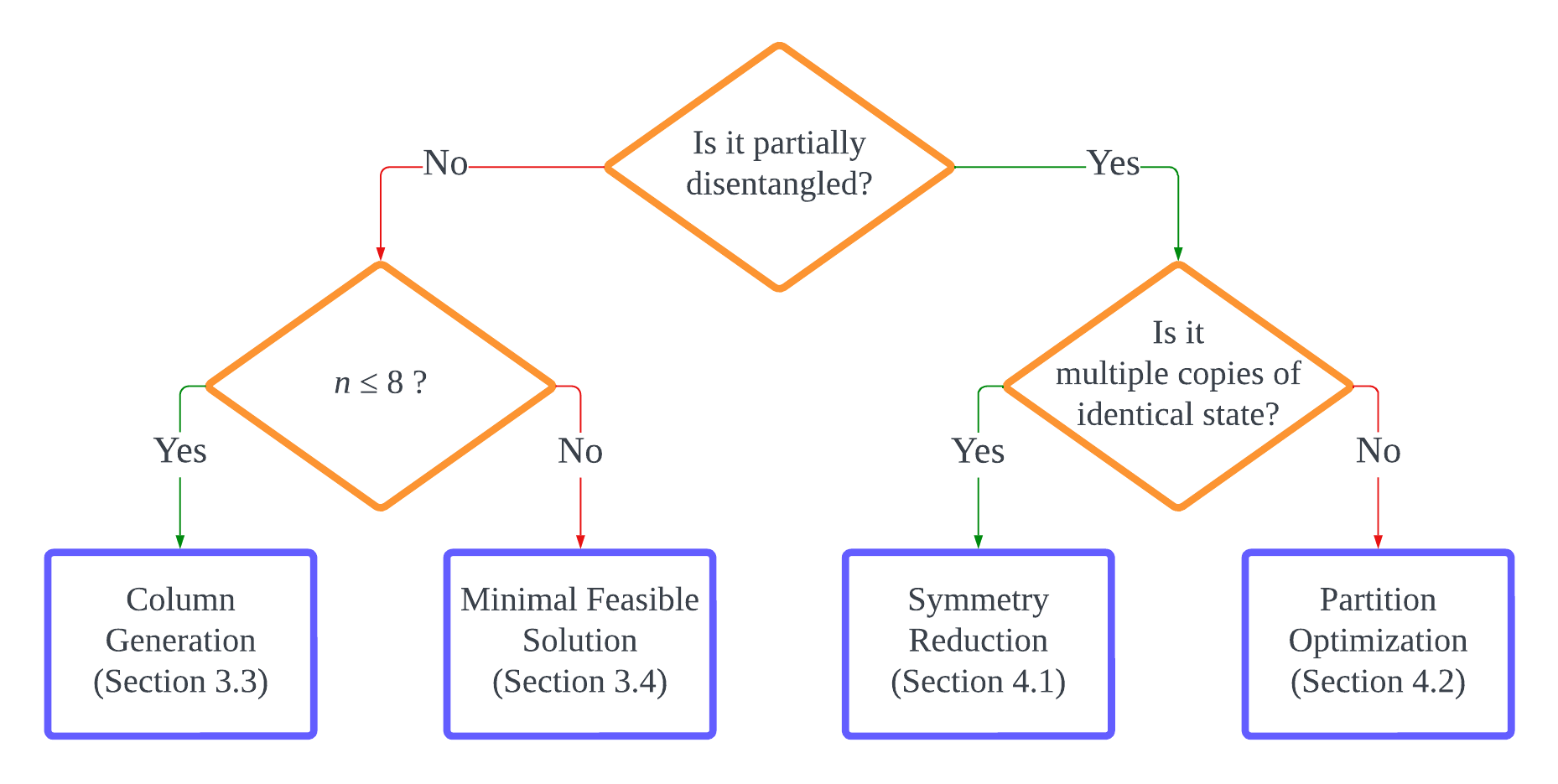}
    \caption{Flow chart of RoM computation. }
    \label{fig:flow_chart}
\end{figure}

\section{Preliminaries}\label{sec:preliminary}

\subsection{Stabilizer State}
Let $\calP_n = \{\pm1, \pm i\} \times \{I, X, Y, Z\}^{\otimes n}$ denote the $n$-qubit Pauli group. For any $n$-qubit stabilizer state $\ket{\phi}$, we define the stabilizer group of $\ket{\phi}$ as $\Stab(\ket{\phi}) = \langle P_1, \dots, P_n \rangle$, where $P \in \Stab(\ket{\phi})$ satisfies $P \ket{\phi} = \ket{\phi}$ and each $P_i \in \calP_n$ is an independent stabilizer generator.
We denote the entire set of $n$-qubit stabilizer states by $\calS_n$, whose size scales super-exponentially as $|\calS_n| = 2^n \prod_{k=0}^{n-1} (2^{n-k} + 1) = 2^{\order{n^2}}$~\cite[Proposition 2]{aaronsonImprovedSimulationStabilizer2004}\cite[Proposition 2]{garciaGeometryStabilizerStates2017}. Additionally, we denote their convex hull as $\STAB_n = \{\sum_j x_j \sigma_j \mid \sigma_j \in \calS_n, x_j \geq 0, \sum_j x_j = 1\}$.
\black{There are concise representations of a stabilizer state, such as the stabilizer tableau~\cite{aaronsonImprovedSimulationStabilizer2004} or the check matrix representation~\cite[Section 10.5.1]{nielsenQuantumComputationQuantum2010}. In this paper, we use the check matrix representation.
For a stabilizer group $\langle P_1, \dots, P_n \rangle$, let each generator $P_i$ be expressed as
\begin{equation}\label{eq:binaryFormOfPi}
    P_i = (-1)^{\delta_i}P_{i,1} \otimes P_{i,2} \otimes \dots  \otimes P_{i,n},
\end{equation}
where $P_{i,j}\in \{I,X,Y,Z\}$ and $\delta_i \in \{0,1\}$.
Note that the complex term is unnecessary. For any element $P\in \{\pm1\} \times \{I,X,Y,Z\}^{\otimes n}$, if $iP$ is in the stabilizer group, then $(iP)(iP)=-I^{\otimes n}$ should also be an element, which is a contradiction.
Let $\bbF_2$ be the finite field with two elements.
The check matrix of the stabilizer group is an $\bbF_2^{n\times 2n}$ matrix defined as
\begin{equation*}
    C \defeq [X_n~Z_n]
\end{equation*}
where $(X_n)_{i,j}\in \bbF_2$ is 1 iff $P_{i,j}$ is $X$ or $Y$, and $(Z_n)_{i,j}\in \bbF_2$ is 1 iff $P_{i,j}$ is $Z$ or $Y$.
The $i$-th generator $P_i$ corresponds to the $i$-th row vector of the matrix $C$, which is denoted by $r(P_i)$.
Using this check matrix representation, we can confirm the independence and the commutativity of stabilizer generators as follows:
\begin{lemma}[{\cite[Proposition 10.3]{nielsenQuantumComputationQuantum2010}}]
\label{lemma:check_matrix_independent}
    The generators $\{ P_1, \dots, P_n \} (\not\ni -I^{\otimes{n}})$ are mutually independent iff the rows of the corresponding check matrix $C$ are linearly independent.
\end{lemma}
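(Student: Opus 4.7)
The plan is to turn the multiplicative statement about Pauli generators into an additive statement about rows of the check matrix over $\bbF_2$, and then read off the equivalence. Concretely, mutual independence of $\{P_1,\dots,P_n\}$ means that no non-trivial product $\prod_{i=1}^n P_i^{a_i}$ (with $a_i\in\{0,1\}$, not all zero) equals the identity, and linear independence of the rows $r(P_1),\dots,r(P_n)$ means that no non-trivial $\bbF_2$-combination $\sum_i a_i r(P_i)$ is zero. I will match these up.

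First I would establish the key bookkeeping fact: if $P,Q\in\{\pm1\}\times\{I,X,Y,Z\}^{\otimes n}$ and we write them in the binary form of Eq.~\eqref{eq:binaryFormOfPi}, then $PQ$ corresponds, modulo a scalar prefactor in $\{\pm1,\pm i\}$, to the entrywise $\bbF_2$-sum of their binary strings. This reduces to the single-qubit computation, where $X\leftrightarrow(1,0)$, $Z\leftrightarrow(0,1)$, $Y\leftrightarrow(1,1)$, $I\leftrightarrow(0,0)$, and in every case $PQ$ has the $\bbF_2$-sum of these labels as its ``Pauli type.'' Iterating this, for any $a\in\bbF_2^n$ the product $P_{1}^{a_1}\cdots P_n^{a_n}$ has check-matrix row equal to $\sum_i a_i r(P_i)$, up to an overall scalar $\lambda(a)\in\{\pm1,\pm i\}$.

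Next I would combine this with the hypothesis $-I^{\otimes n}\notin\langle P_1,\dots,P_n\rangle$, which (as the paragraph preceding the lemma already argues) automatically rules out any factor of $\pm i$ in $\lambda(a)$: if $\lambda(a)=\pm i$, then squaring the product yields $-I^{\otimes n}$, a contradiction. Hence $\lambda(a)\in\{\pm1\}$, and the product equals $I^{\otimes n}$ as a (signed) Pauli operator precisely when $\sum_i a_i r(P_i)=0$ in $\bbF_2^{2n}$, again using that the sign $-I^{\otimes n}$ is excluded. From here both directions are immediate: a non-trivial dependence among the rows yields a non-trivial product relation among the $P_i$, and conversely a non-trivial product relation forces the corresponding $\bbF_2$-sum of rows to vanish.

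The main obstacle, such as it is, is not algebraic but hygienic: one has to track the scalar factors carefully enough to invoke the $-I^{\otimes n}\notin\langle P_i\rangle$ assumption at exactly the right point, since without it the equivalence would fail (for example $\{X,-X\}$ has linearly dependent rows of the check matrix but could look ``independent'' as abstract group elements if signs were ignored). Once the scalar bookkeeping in the first step is done cleanly, the rest is a direct translation between multiplicative dependence of signed Paulis and linear dependence over $\bbF_2$.
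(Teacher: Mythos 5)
Your proof is correct. Note that the paper itself does not prove this lemma at all---it is imported verbatim as Proposition~10.3 of Nielsen and Chuang---and your argument is essentially the standard textbook one: the map $P \mapsto r(P)$ turns Pauli multiplication into $\bbF_2$-addition of rows up to a scalar, and the hypothesis $-I^{\otimes n}\notin\langle P_1,\dots,P_n\rangle$ is invoked exactly where it must be, both to force the scalar into $\{\pm 1\}$ and to upgrade ``product is $\pm I^{\otimes n}$'' to ``product is $I^{\otimes n}$.'' No gaps.
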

\begin{lemma}[{\cite[Exercise 10.33]{nielsenQuantumComputationQuantum2010}}]
\label{lemma:check_matrix_commutativity}
    The generators $\{ P_1, \dots, P_n \}$ are commute $([P_i,P_j]=0)$ iff the corresponding check matrix $C$ satisfies
    $C \mqty(
        0 & I_n \\
        I_n & 0
    ) C^\top = 0$, where $I_n$ is the identity matrix of size $n$.
\end{lemma}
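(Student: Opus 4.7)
The plan is to show that the matrix identity $C\Lambda C^\top = 0$, with $\Lambda = \mqty(0 & I_n \\ I_n & 0)$, encodes exactly the pairwise commutation of the generators. First I would observe that the overall sign $(-1)^{\delta_i}$ appearing in Eq.~(\ref{eq:binaryFormOfPi}) is a scalar and therefore cancels inside the commutator $[P_i, P_j]$. Hence it suffices to analyze the unsigned tensor products $P_{i,1}\otimes\cdots\otimes P_{i,n}$, which are precisely the data encoded in the $i$-th row $r(P_i)$ of $C$.

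Next I would reduce the problem to one qubit. Writing each single-qubit factor, up to an irrelevant overall phase, as $P_{i,k} \propto X^{(X_n)_{i,k}} Z^{(Z_n)_{i,k}}$, and using the elementary relation $ZX = -XZ$, a short computation gives
\begin{equation*}
    P_{i,k} P_{j,k} = (-1)^{(X_n)_{j,k}(Z_n)_{i,k} + (X_n)_{i,k}(Z_n)_{j,k}} P_{j,k} P_{i,k}.
\end{equation*}
Because tensor products of operators multiply componentwise, these local signs accumulate to
\begin{equation*}
    P_i P_j = (-1)^{\sum_{k=1}^{n}\left[(X_n)_{i,k}(Z_n)_{j,k} + (Z_n)_{i,k}(X_n)_{j,k}\right]} P_j P_i.
\end{equation*}

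The key recognition is that the exponent above coincides, modulo $2$, with $r(P_i)\,\Lambda\,r(P_j)^\top$: multiplication by $\Lambda$ swaps the $X$- and $Z$-blocks of $r(P_i)$, so the subsequent inner product with $r(P_j)$ yields exactly the cross terms $(X_n)_{i,k}(Z_n)_{j,k} + (Z_n)_{i,k}(X_n)_{j,k}$. Thus $[P_i,P_j]=0$ is equivalent to $r(P_i)\,\Lambda\,r(P_j)^\top = 0$, and requiring this for every pair $(i,j)$ is precisely the matrix equation $C\Lambda C^\top = 0$, which settles both implications simultaneously.

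I do not anticipate a genuine obstacle; the two points that require care are (i) making the phase-independence of $[P_i,P_j]$ explicit so that the sign data $\delta_i$ and the implicit $i$-factors inside $Y = iXZ$ can be safely discarded, and (ii) bookkeeping signs through the tensor product so as to recognize the symplectic form $\Lambda$ in the final exponent.
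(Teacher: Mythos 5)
Your argument is correct: the reduction to unsigned $X^aZ^b$ factors, the componentwise accumulation of the $ZX=-XZ$ signs, and the identification of the resulting exponent with the symplectic form $r(P_i)\,\Lambda\,r(P_j)^\top$ is exactly the standard proof of this fact. The paper itself gives no proof, deferring to the cited exercise in Nielsen and Chuang, and your write-up is precisely the argument that citation intends, so there is nothing to compare beyond noting that your two points of care (phase-independence of the commutation relation, and the sign bookkeeping through the tensor product) are indeed the only places where care is needed.
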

}

\subsection{Robustness of Magic}
Robustness of Magic (RoM) of a given $n$-qubit quantum state $\rho$ can be interpreted as distance from the polytope of free states $\STAB_n$ and is defined as~\cite{howardApplicationResourceTheory2017a}\cite[Section 2]{seddonQuantifyingMagicMultiqubit2019}
\begin{equation}\label{eq:RoMDefOrig}
    \calR(\rho) \defeq \min_{\sigma_+, \sigma_-\in\STAB_n}\left\{2p+1 \relmiddle| \rho = (p+1) \sigma_+ - p \sigma_-,~ p\geq 0\right\}.
\end{equation}
Given the entire set of stabilizer states $\calS_n=\{\sigma_j\}_{j=1}^{\abs{\calS_n}}$, it is straightforward to demonstrate that this yields an equivalent expression as
\begin{equation*}
    \calR(\rho) = \min_{\bm{x}\in\bbR^{\abs{\calS_n}}}\left\{\norm{\bm{x}}_1 \relmiddle| \rho = \sum_{j=1}^{\abs{\calS_n}} x_j \sigma_j \right\},
\end{equation*}
which can be further simplified as
\begin{equation}\label{eq:primal_RoM}
\calR(\rho) = \min_{\bm{x}\in\bbR^{\abs{\calS_n}}}\left\{\norm{\bm{x}}_1 \relmiddle| \Amat_n \bm{x} = \bm{b} \right\}.
\end{equation}
Here, we have utilized the unique decomposition of the quantum state into $n$-qubit Pauli operators defined by $b_i = \Tr[P_i \rho]$ and $(\Amat_n)_{i,j} = \Tr[P_i \sigma_j]$ where $P_i~(1\leq i\leq 4^n)$ is the $i$-th Pauli operator in $\{I, X, Y, Z\}^{\otimes n}$. We call $\bm{b}$ as the Pauli vector of $\rho$, which can be computed with time complexity of $\order{n4^n}$ as described in Appendix~\ref{subapp:pauli_vec_innerproduct}.
\black{Let $\sigma_j \in \calS_n$ be expressed as $\ketbra{\phi_j}{\phi_j}$, and $\bm{a}_j$ denote the $j$-th column of $\Amat_n$. Note that $(\bm{a}_j)_i=(\Amat_n)_{i,j}$ is $1(=\Tr[P_i \ketbra{\phi_j}{\phi_j}])$ if $P_i$ is a stabilizer of $\ket{\phi_j}$, $-1$ if $-P_i$ is a stabilizer of $\ket{\phi_j}$, and $0$ otherwise.} 
In practice, one may solve Eq.~\eqref{eq:primal_RoM} as 
\begin{mini}
    {\bm{u}}{\sum_{i}u_i}{\label{eq:RoM_LP}}{}
    \addConstraint{\mqty( \Amat_{n} & -\Amat_{n} ) \bm{u}}{=\bm{b}}
    \addConstraint{\bm{u}}{\geq \bm{0}},
\end{mini}
where the inequality for $\bm{u}$ is element-wise. This is a standard form of Linear Programming (LP), and we can solve this by LP solvers such as Gurobi~\cite{gurobi} or
CVXPY~\cite{10.5555/2946645.3007036,agrawal2018rewriting}.

\subsection{Dualized Robustness of Magic}
\label{subsec:dualizedRoM}
Since RoM can be formalized via the standard form of the linear program, the strong duality holds, which implies that the dual problem gives an equivalent definition.
Concretely, RoM can be computed via the following:
\begin{equation}\label{eq:RoM_dual}
    \calR(\rho) = \max_{\bm{y}\in\bbR^{4^n}} \left\{ \bm{b}^{\top} \bm{y} \relmiddle| -\bm{1} \leq \Amat_n^\top \bm{y} \leq \bm{1} \right\},
\end{equation}
where $\bm{1}$ is a length-$\abs{\calS_n}$ vector with all the elements given by unity.
By nature of the dual problem, any feasible solution yields a lower bound for RoM.
For instance, RoM can be lower-bounded by the st-norm $\stnorm{\rho}=\frac{1}{2^n} \norm{\bm{b}}_1$ by taking $\bm{y}$ as $y_i = \mathrm{sgn}(b_i)/2^n$~\cite[Appendix B]{heinrichRobustnessMagicSymmetries2019}.

For the subsequent discussion, we introduce some notations. Let $\bm{C}$ be a submatrix of $\Amat_n$ with $4^n$ rows.
We define the set of all columns in $\Amat_n$ as $\calA_n =\{\bm{a}_j\}_{j=1}^{\abs{\calS_n}}$,
and that in $\bm{C}$ as $\calC \subseteq \calA_n$.
We define the minimization problem $\texttt{Prob}(\bm{C}, \bm{b})$ as Eq.~\eqref{eq:RoM_LP} using the matrix $\bm{C}$ instead of $\bm{A}_n$.
We also define a function $\texttt{SolveLP}(\calC, \bm{b})$ which solves $\texttt{Prob}(\bm{C}, \bm{b})$.
It returns the primal solution $\bm{x}$ for Eq.~\eqref{eq:primal_RoM} and the dual solution $\bm{y}$ for Eq.~\eqref{eq:RoM_dual}.

\section{Scaling Up RoM Calculation for Arbitrary States} \label{sec:scale_up}

It is well known that linear programming problems are solvable in polynomial time to the matrix size.
However, the matrix size of $\Amat_n$ itself is $4^n \times |\calS_n|$ where $\abs{\calS_n}=2^{\order{n^2}}$, and hence it is impractical to use the entire $\Amat_n$ to tackle $n>5$ qubit systems~\cite{haugEfficientStabilizerEntropies2023}.

Motivated by this challenging scenario, we propose numerical algorithms to compute RoM for arbitrary quantum states, surpassing the state-of-the-art system size. The key technique involves overlap calculation with two notable features: (i) the time complexity is reduced from $\order{2^n\abs{\calS_n}}$ to $\order{n\abs{\calS_n}}$, representing an exponential reduction from $\order{2^n}$ to $\order{n}$ per stabilizer state, and (ii) the space complexity is super-exponentially reduced from $\order{2^n\abs{\calS_n}}$ to $\order{2^n}$, as we avoid explicitly constructing the entire $\Amat_n$.

Based on this subroutine, Algorithm~\ref{alg:main}, referred to as the top-overlap method, solves the primal problem~\eqref{eq:primal_RoM} using a limited set of stabilizers. These stabilizers are selected based on the largest or smallest overlaps with the target quantum states.
Algorithm~\ref{alg:column_generation}, using the Column Generation method, improves upon Algorithm~\ref{alg:main}. It iteratively adds stabilizer states to the decomposition until all the inequality constraints in the dual problem~\eqref{eq:RoM_dual} are satisfied. This ensures that the algorithm provides exact RoM.

In the following, we first present the fast overlap computation algorithm in Sec.~\ref{subsec:overlap}, and then proceed to introduce two algorithms in Sec.~\ref{subsec:top_K} and~\ref{subsec:column_generation}, respectively.
We demonstrate that these algorithms allow us to compute exact RoM for systems up to $n=8$ qubits. In this context, the memory consumption for the subroutine is reduced by a factor of $10^8$; compared to the entire $\Amat_n$ size of $\SI{86}{\pebi\byte}$, we can execute the subroutine only with $\SI{512}{\mebi\byte}$.

\subsection{Core Subroutine: Fast Computation of Stabilizer Overlaps}\label{subsec:overlap}

First, we introduce the core subroutine in our work that computes the overlaps between all the stabilizer states and the target state, i.e., $\Amat_n^\top \bm{b}$, or the stabilizer overlaps in short.
When we resort to a naive calculation, 
it requires the time complexity of  $\order{2^{n}\abs{\calS_n}}$ to compute all the stabilizer overlaps via the matrix-vector product of $\Amat_n^\top \bm{b}$, even if we utilize the sparsity of $\Amat_n$. We can speed up this subroutine to $\order{n\abs{\calS_n}}$.
\begin{theorem}[Computing stabilizer overlaps]\label{thm:inner_product}
    $\Amat_n^\top \bm{b}$ can be computed \black{in $\order{n\abs{\calS_n}}$ time and in $\order{2^n}$ extra space excluding inputs and outputs}. 
\end{theorem}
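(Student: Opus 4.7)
The plan is to partition the $|\calS_n|$ stabilizer states by their underlying unsigned stabilizer group. Each maximal commuting subgroup $\bar S\subset\{I,X,Y,Z\}^{\otimes n}$ of order $2^n$ supports exactly $2^n$ stabilizer states, one per sign choice on a fixed generating set. By the Lagrangian subspace count for the symplectic form of Lemma~\ref{lemma:check_matrix_commutativity}, there are $\prod_{k=0}^{n-1}(2^{n-k}+1)=|\calS_n|/2^n$ such subgroups, and I would process them one at a time, reusing a single scratch buffer of size $2^n$ so that the $\order{2^n}$ extra-space budget is respected.

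For a fixed $\bar S$ with generators $g_1,\dots,g_n$, every element admits the unique expansion $P_v=g_1^{v_1}\cdots g_n^{v_n}$ for $v\in\bbF_2^n$, with a cocycle sign $c(v)\in\{\pm1\}$ coming from Pauli multiplication. A stabilizer state over $\bar S$ corresponds to sign assignments $s_i=(-1)^{u_i}$ on the generators, and a direct calculation shows that the corresponding entry of $\Amat_n^\top\bm b$ equals
\begin{equation*}
    \sum_{v\in\bbF_2^n}(-1)^{u\cdot v}\,c(v)\,b_{P_v}.
\end{equation*}
This is precisely the Walsh--Hadamard transform of $\tilde b_v\defeq c(v)\,b_{P_v}$, computable in place in $\order{n\,2^n}$ time and $\order{2^n}$ space; a single sweep yields all $2^n$ overlaps associated with $\bar S$.

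To complete the argument I would (i) fill in the $\tilde b_v$ values by a Gray-code enumeration of $v\in\bbF_2^n$, costing one Pauli multiplication per step and $\order{2^n}$ work per subgroup; (ii) enumerate the Lagrangian subspaces themselves, emitting $n$ generators with $\order{n}$ work per subspace, which is subdominant; and (iii) apply the in-place WHT. Summing over subspaces gives $\order{n\,2^n}\cdot|\calS_n|/2^n=\order{n|\calS_n|}$ time and $\order{2^n}$ extra space, matching both claims. The main obstacle is the cocycle bookkeeping: $c(v)$ depends on the generators and on multiplication order, and one must verify both that it is maintained consistently under the Gray-code traversal and that the $2^n$ sign patterns $u$ biject onto the $2^n$ stabilizer states lying over $\bar S$ (so the WHT outputs are indeed the intended entries of $\Amat_n^\top \bm b$, rather than permuted or additionally signed). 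Once this sign calculus is pinned down, the Walsh--Hadamard step and the enumeration of Lagrangian subspaces over $\bbF_2$ are both entirely standard.
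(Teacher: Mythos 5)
Your proposal is correct and follows essentially the same route as the paper: both partition $\calA_n$ into $\abs{\calS_n}/2^n$ blocks indexed by maximal commuting unsigned Pauli subgroups (equivalently, check matrices), observe that each block is a row-permuted, sign-flipped Walsh--Hadamard matrix $H_n$ whose sign flips are exactly your cocycle $c(v)$ (the phase of $P_\gamma = P_1^{\gamma_1}\cdots P_n^{\gamma_n}$ in Lemma~\ref{lem:Amat_decomposition}), and apply the in-place FWHT of Lemma~\ref{lem:Walsh--Hadamard} block by block with a reused $\order{2^n}$ buffer. The sign bookkeeping and the bijection between the $2^n$ sign patterns and the stabilizer states over a fixed subgroup that you flag as the remaining obstacle are precisely what the proof of Lemma~\ref{lem:Amat_decomposition} settles via the independence and commutativity conditions of Lemmas~\ref{lemma:check_matrix_independent} and~\ref{lemma:check_matrix_commutativity}.
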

As we detail later in Sec.~\ref{subsec:top_K}, this technique is crucial for scaling up RoM calculation to larger systems.
To prove Theorem~\ref{thm:inner_product}, 
\black{it is beneficial to utilize} the Fast Walsh--Hadamard Transform (FWHT) algorithm~\cite{finoUnifiedMatrixTreatment1976}, which performs matrix-vector product operations when the matrix has a certain tensor product structure.
Here we use the unnormalized Walsh--Hadamard matrix $H_n \defeq \mqty(
    1   & 1        \\
    1   & -1
    )^{\otimes n}$,
and refer to the matrix-vector multiplication of $H_n$ as the FWHT algorithm. As evident from the pseudocode in Appendix~\ref{app:fwht_pseudocode}, its computational cost is as follows.
\begin{lemma}[Complexity of FWHT algorithm]\label{lem:Walsh--Hadamard}
    In-place matrix-vector multiplication of $H_n$ can be done with time complexity of $\order{n2^n}$ and space complexity of $\order{2^n}$.
\end{lemma}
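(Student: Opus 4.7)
The plan is to exploit the tensor product factorization of $H_n$ into $n$ sparse ``butterfly'' factors, each of which can be applied to a length-$2^n$ vector in linear time and in place. Concretely, I would write
\begin{equation*}
    H_n = \prod_{k=1}^{n} \bigl( I_{2^{k-1}} \otimes H \otimes I_{2^{n-k}} \bigr),
\end{equation*}
which follows immediately from the mixed-product identity $(A\otimes B)(C\otimes D) = AC\otimes BD$ applied repeatedly to the definition $H_n = H^{\otimes n}$.

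Next I would analyze one factor $F_k \defeq I_{2^{k-1}} \otimes H \otimes I_{2^{n-k}}$. Indexing the $2^n$ components of a vector $\bm{v}$ by bitstrings $(b_1,\dots,b_n)\in\bbF_2^n$, multiplication by $F_k$ only couples pairs of indices that differ exactly in the $k$-th bit: the new values are $v'_{\dots 0 \dots} = v_{\dots 0 \dots} + v_{\dots 1 \dots}$ and $v'_{\dots 1 \dots} = v_{\dots 0 \dots} - v_{\dots 1 \dots}$. Since the $2^{n-1}$ coupled pairs are disjoint, the update can be performed in place using two temporary scalars per pair, yielding $\order{2^n}$ time and $\order{1}$ extra space for each factor.

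Applying the $n$ factors sequentially therefore computes $H_n \bm{v}$ in $\order{n 2^n}$ time and with only the $\order{2^n}$ storage needed to hold $\bm{v}$ itself plus $\order{1}$ scratch memory, which I would confirm matches the pseudocode referenced in Appendix~\ref{app:fwht_pseudocode}. The only ``obstacle'' is really a bookkeeping one: verifying that the butterfly pairs at stage $k$ are indeed disjoint so that the in-place overwrite is safe, and that the order of the $n$ factors is irrelevant (they pairwise commute because they act on disjoint tensor slots). Both are immediate from the bitstring picture, so the lemma follows.
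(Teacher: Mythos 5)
Your proof is correct and matches the paper's approach: the paper simply points to the pseudocode in Appendix~\ref{app:fwht_pseudocode}, and your tensor factorization $H_n = \prod_k (I_{2^{k-1}} \otimes H \otimes I_{2^{n-k}})$ with disjoint butterfly pairs is exactly the standard justification of that in-place algorithm and its $\order{n2^n}$ time / $\order{2^n}$ space bounds. No gaps.
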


Indeed, $\Amat_n$ is essentially constructed from unnormalized Walsh--Hadamard matrices (see Fig.~\ref{fig:Amat_2}). 
We define $\calW_n$ as the set of all sparsified Walsh--Hadamard matrices, which can be represented as $\begin{bmatrix} H_n\\ 0 \end{bmatrix} \in \bbR^{4^n \times 2^n}$ by appropriately reordering and flipping the signs of the rows.
$\Amat_n$ is formed by concatenating these $W \in \calW_n$ as follows:
\begin{figure}[tb]
    \centering
    \includegraphics[width=\linewidth]{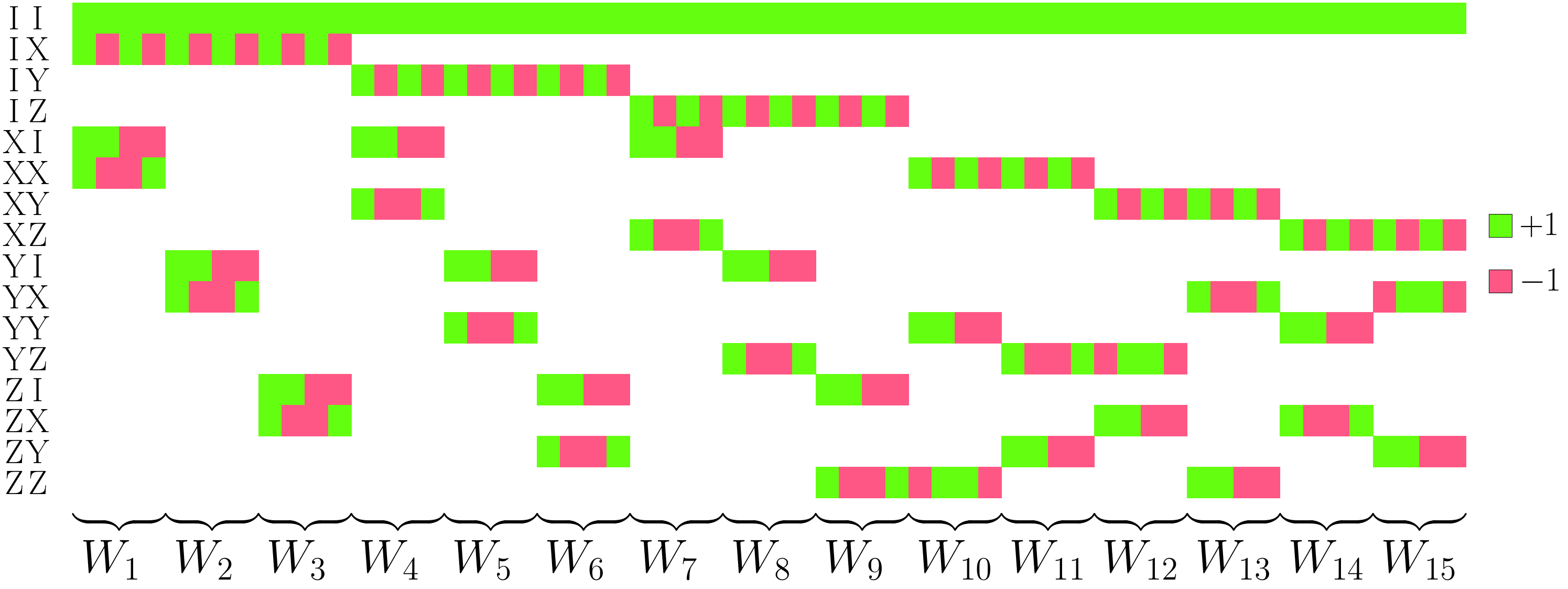}
    \caption{Visualization of $\Amat_n$ for $n=2$.
    As stated in Lemma~\ref{lem:Amat_decomposition},
    $\Amat_n$ is composed of $W_i \in \calW_n$.
    }
    \label{fig:Amat_2}
\end{figure}
\begin{lemma}[Construction of $\Amat_n$]\label{lem:Amat_decomposition}
    For all $n\in \bbN$, using $W_j\in\calW_n$, $\Amat_n$ can be written as
    \begin{equation*}
         \Amat_n = \mqty[W_1& \cdots&W_{\abs{\calS_n}/2^n}],
    \end{equation*}
    and there exists a constructive method to enumerate $\{W_j\}_{j=1}^{\abs{\calS_n}/2^n}$.
\end{lemma}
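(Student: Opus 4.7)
The decomposition mirrors the fact that the columns of $\Amat_n$ group naturally into blocks of $2^n$, one per maximal commuting subset of unsigned Pauli operators. First I would partition $\calS_n$ by the \emph{unsigned} stabilizer group: two stabilizer states are declared equivalent iff their stabilizer groups coincide after dropping the $\pm1$ prefactors in Eq.~\eqref{eq:binaryFormOfPi}. Each equivalence class is identified with a maximal commuting subset $T\subseteq\{I,X,Y,Z\}^{\otimes n}$ of size $2^n$, and contains exactly $2^n$ stabilizer states (one per choice of signs $\delta\in\bbF_2^n$ on the $n$ chosen generators). A dimension count then yields $\abs{\calS_n}/2^n$ blocks, matching the claim.

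Next I would analyze one block in isolation. Fix generators $P_1,\dots,P_n$ of a maximal commuting group and expand $P_1^{s_1}\cdots P_n^{s_n} = \epsilon(s)\,Q_s$ with $\epsilon(s)\in\{\pm1\}$ and $Q_s\in\{I,X,Y,Z\}^{\otimes n}$, where $\epsilon(s)$ absorbs the phases from the Pauli multiplication rule. The $2^n$ states in the block are parameterized by $t\in\bbF_2^n$: the state $\ket{\phi_t}$ has generators $(-1)^{t_i}P_i$, so its stabilizer group contains $(-1)^{t\cdot s}\epsilon(s)\,Q_s$ for every $s$. Consequently the column $\bm{a}_t$ vanishes outside the $2^n$ rows indexed by $T$, and on those rows takes the value $\epsilon(s)(-1)^{t\cdot s}$. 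Stacking the $2^n$ columns, the $T$-indexed rows form $\mathrm{diag}(\epsilon)\,H_n$; permuting those rows to the top and flipping signs by $\epsilon(s)$ reproduces $\begin{bmatrix} H_n \\ 0 \end{bmatrix}$, so the block belongs to $\calW_n$.

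For the constructive enumeration, I would iterate over all maximal commuting sets $T$ by picking generators $P_1,\dots,P_n$ greedily: at step $k$, extend the current list by any Pauli that is linearly independent of the previous $k-1$ rows of the check matrix (Lemma~\ref{lemma:check_matrix_independent}) and commutes with them (Lemma~\ref{lemma:check_matrix_commutativity}). For each enumerated $T$, precompute $\epsilon(\cdot)$ once from the multiplication table and emit the corresponding block $W_T$. The main obstacle is the careful bookkeeping of the multiplication phases $\epsilon(s)$ and verifying that the signs collapse cleanly into the bilinear pattern $(-1)^{t\cdot s}$ demanded by $H_n$; once that is done, the row-permutation-and-sign-flip equivalence to $\begin{bmatrix} H_n \\ 0 \end{bmatrix}$ is immediate, and the count $2^n\cdot\abs{\calS_n}/2^n=\abs{\calS_n}$ confirms that all columns of $\Amat_n$ are accounted for.
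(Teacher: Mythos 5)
Your analysis of a single block is essentially the paper's argument: both proofs observe that the $\gamma$-th element of the sign-shifted group $\langle(-1)^{t_1}P_1,\dots,(-1)^{t_n}P_n\rangle$ carries the coefficient $(-1)^{t\cdot\gamma}$ times a fixed sign depending only on $\gamma$, so the $2^n$ columns sharing an unsigned stabilizer group assemble into $H_n$ up to row permutation and sign flips. You are in fact slightly more explicit than the paper about the phase $\epsilon(s)$ coming from the Pauli multiplication table, which the paper sweeps into the phrase ``reordering and flipping the sign of each row according to the Pauli operator and phase of each $P_\gamma$.'' That part of your proposal is fine.

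The gap is in the constructive enumeration. Your greedy procedure --- ``extend the current list by any Pauli that is independent of and commutes with the previous ones'' --- enumerates ordered generating sets, not maximal commuting groups: each group of size $2^n$ admits on the order of $\prod_{k=0}^{n-1}(2^n-2^k)$ distinct generating sequences, so without an explicit canonicalization or deduplication step you produce each block $W_T$ with enormous multiplicity, and you have not shown that the deduplicated output has exactly $\abs{\calS_n}/2^n$ elements rather than relying on the known formula for $\abs{\calS_n}$ as an external fact. The paper avoids this by exhibiting a \emph{canonical form} for the check matrix, $\left(\begin{smallmatrix} I_k & \hat{X} & \hat{Z} & O \\ O & O & \hat{X}^\top & I_{n-k}\end{smallmatrix}\right)$ with $\hat{X}$ in reduced row echelon form of rank $k$ and $\hat{Z}$ symmetric, so that distinct parameter choices give distinct groups and every group appears exactly once; the count $\sum_{k=0}^n \qBinom{n}{k}_2\, 2^{k(k+1)/2}=\prod_{k=0}^{n-1}(2^{n-k}+1)=\abs{\calS_n}/2^n$ via the $q$-binomial theorem then certifies completeness. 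To repair your version you would need either such a canonical representative per group or a proof that your deduplicated greedy enumeration is a bijection onto the set of maximal commuting subgroups; as written, the ``constructive method'' clause of the lemma is not established.
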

\begin{proof}
    \black{Firstly, for a check matrix $C$ that satisfies both conditions in Lemma~\ref{lemma:check_matrix_independent} and Lemma~\ref{lemma:check_matrix_commutativity}, we consider the stabilizer group $\langle P_1,\dots,P_n \rangle$ with $\delta_i=0 \; (1 \leq i \leq n)$ in Eq.~\eqref{eq:binaryFormOfPi}. We define $\delta$ as $\sum_{i=1}^{n} 2^{i-1} \delta_i$ and $\gamma$ as $\sum_{i=1}^{n} 2^{i-1} \gamma_i \; (\gamma_i \in \{0,1\})$. Each element of the stabilizer group is expressed as $P_\gamma = P_1^{\gamma_1} \cdots P_n^{\gamma_n}$. Note that all generators commute, and all elements are distinct due to the conditions of the lemmas. 
    For a stabilizer group with $\delta$, i.e., $\langle (-1)^{\delta_1} P_1, \dots, (-1)^{\delta_n} P_n \rangle$, the $\gamma$-th element is given by $((-1)^{\delta_1} P_1)^{\gamma_1} \cdots ((-1)^{\delta_n} P_n)^{\gamma_n} = (-1)^{\sum_{i=1}^{n} \gamma_i \delta_i} P_\gamma$, whose coefficient matches $(H_n)_{\gamma, \delta} = (-1)^{\sum_{i=1}^{n} \gamma_i \delta_i}$. Thus, for a check matrix $C$, the Walsh--Hadamard matrix $H_n$ arises. By reordering and flipping the sign of each row in $H_n$ according to the Pauli operator and phase of each $P_\gamma$, $W \in \calW_n$ for the check matrix $C$ can be obtained.
    }

    \black{Next, we present a constructive method to enumerate all the check matrices. Note that this allows us to enumerate $\{W_j\}_{j=1}^{\abs{\calS_n}/2^n}$ simply by computing all $P_\gamma$ for a given check matrix $C$ and multiplying by $H_n$. To enumerate check matrices, we use the following standard form of the check matrix representation:
    \begin{align}
        \left(
        \begin{array}{cc|cc}
            I_{k} & \hat{X} & \hat{Z} & O \\
            O & O & \hat{X}^\top & I_{n-k} \\
        \end{array}
        \right),\label{eq:check_matrix_standard_form}
    \end{align}
    where $\hat{X}\in\bbF_2^{k \times (n-k)}$ is given by reduced row echelon form with rank $k$ and $\hat{Z}=\hat{Z}^\top \in \bbF_2^{k\times k}$ is a symmetric matrix.
    Since all the choices of $\hat{X}$ and $\hat{Z}$ produce mutually distinct check matrices that satisfy both Lemmas~\ref{lemma:check_matrix_independent} and~\ref{lemma:check_matrix_commutativity}, we can ensure that Eq.~\eqref{eq:check_matrix_standard_form} yields a unique construction of the set of check matrices.
    We can verify that this 
 method indeed constructs all the check matrices.
    The number of choices for $\hat{Z}$ is $2^{k(k+1)/2}$, and for $\hat{X}$ it is given by the $q$-binomial coefficient $\qBinom{n}{k}_2$~\cite[Theorem 7.1]{kacQuantumCalculus2002}, defined for a general $q (\neq 1)$ as
    \begin{equation*}
    \qBinom{n}{k}_q = \frac{(1-q^n)(1-q^{n-1})\dots (1-q^{n-k+1})}{(1-q)(1-q^2)\dots(1-q^k)}.
    \end{equation*}
    Now, the $q$-binomial theorem~\cite{struchalinExperimentalEstimationQuantum2021a} shows $\sum_{k=0}^{n} \qBinom{n}{k}_2 2^{k(k+1)/2} = \prod_{k=0}^{n-1} (2^{n-k} + 1) = \abs{\calS_n}/2^n$. This verifies that the constructive method enumerates all the stabilizer states, which \mbox{completes} the proof.
    }
\end{proof}
We utilized the Gray code to enumerate the matrix representations in the actual numerical implementation. Refer to the codes available on GitHub~\cite{yoshiokaQuantumprogrammingRoMhandbookHandbook2023} for details. Now, by applying Lemma~\ref{lem:Walsh--Hadamard} for each $W_j$ in Lemma~\ref{lem:Amat_decomposition}, we complete the proof of Theorem~\ref{thm:inner_product}.

\black{We can compute the maximum fidelity as well.
The fidelity between quantum states $\sigma_j=\ketbra{\phi_j}{\phi_j}$ and $\rho$ is defined as
\begin{equation*}
    F(\sigma_j, \rho) \defeq \Tr\qty[ \sqrt{\sigma_j^{1/2} \rho \sigma_j^{1/2}}]
    = \Tr\qty[\sqrt{|\phi_j \rangle\langle \phi_j | \rho |\phi_j \rangle\langle \phi_j |}]
    = \sqrt{\langle \phi_j | \rho |\phi_j \rangle}.
\end{equation*}
Since $\bm{b}$ is a Pauli vector of $\rho$,  we have $\rho = \frac{1}{2^n} \sum_{i=1}^{4^n} b_i P_i$ and
\begin{equation*}
    F^2(\sigma_j, \rho)
    = \langle \phi_j | \rho |\phi_j \rangle
    = \Tr[\sigma_j \rho]
    = \Tr[\qty(\frac{1}{2^n} \sum_{i=1}^{4^n} (\bm{a}_j)_{i} P_i)\qty(\frac{1}{2^n} \sum_{i=1}^{4^n} b_i P_i)]
    = \frac{1}{2^n} \bm{a}_{j}^\top \bm{b}.
\end{equation*}
In the last equality, we used the orthogonality of Pauli matrices.
Based on this relationship, we can derive the following corollary of Theorem~\ref{thm:inner_product}.
\begin{corollary}[Computing maximum fidelity]
Maximum fidelity for $\rho$ can be computed with the time complexity of $\order{n\abs{\calS_n}}$ and the space complexity of $\order{2^n}$ excluding inputs.
\end{corollary}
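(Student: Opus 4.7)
The plan is to reduce the maximum fidelity problem to a streaming scan over the overlap vector $\Amat_n^\top \bm{b}$, and then invoke Theorem~\ref{thm:inner_product} as a black box. The excerpt already establishes the identity $F^2(\sigma_j, \rho) = \frac{1}{2^n}\,\bm{a}_j^\top \bm{b}$. Since $F(\sigma_j,\rho) \ge 0$ and the square root is monotonically increasing, we have
\begin{equation*}
    \max_{\sigma_j \in \calS_n} F(\sigma_j, \rho) = \sqrt{\frac{1}{2^n}\,\max_{1 \le j \le |\calS_n|} (\bm{a}_j^\top \bm{b})}\,.
\end{equation*}
So the task reduces to finding the largest entry of the vector $\Amat_n^\top \bm{b}$.

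If we simply invoked Theorem~\ref{thm:inner_product} to materialize the full overlap vector and then scanned it, the time bound $\order{n|\calS_n|}$ would be immediate, but the storage would balloon to $\order{|\calS_n|}$ and violate the claimed $\order{2^n}$ working-space budget. The fix is to interleave the maximum-tracking with the block-structured construction from Lemma~\ref{lem:Amat_decomposition}: $\Amat_n$ is built column-block by column-block, each block corresponding to one $W_j \in \calW_n$ of width $2^n$. I would maintain a single scalar running maximum $m$, initialized to $-\infty$. For each block $W_j$, apply the FWHT-based procedure of Lemma~\ref{lem:Walsh--Hadamard} to compute the $2^n$ overlaps $W_j^\top \bm{b}$ in $\order{n 2^n}$ time and $\order{2^n}$ extra space, replace $m \leftarrow \max(m, \max_i (W_j^\top \bm{b})_i)$, and then discard the block before moving on. Summing over the $|\calS_n|/2^n$ blocks yields total time $\order{n|\calS_n|}$, while the per-block scratch space of $\order{2^n}$ is reused, so the extra space stays $\order{2^n}$.

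Finally, return $\sqrt{m/2^n}$, which costs $\order{1}$ additional time and space. Both the time and space bounds of the corollary then follow from the bounds in Theorem~\ref{thm:inner_product} and Lemma~\ref{lem:Walsh--Hadamard}, together with the fact that a streaming maximum needs only constant auxiliary memory. There is essentially no serious obstacle here; the only point requiring care is to ensure that the FWHT routine is invoked in its in-place form so that the temporary $\order{2^n}$ buffer is not duplicated across blocks, and that the sign-flip/row-permutation information distinguishing $W_j$ from the bare $H_n$ (as described in the proof of Lemma~\ref{lem:Amat_decomposition}) can be applied on the fly without constructing any $4^n$-sized intermediate aside from $\bm{b}$ itself, which is part of the input.
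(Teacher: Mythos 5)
Your proposal is correct and matches the paper's intended argument: the paper derives the corollary directly from the identity $F^2(\sigma_j,\rho)=\frac{1}{2^n}\bm{a}_j^\top\bm{b}$ together with the blockwise FWHT computation of Theorem~\ref{thm:inner_product}, and your streaming-maximum refinement is exactly what makes the $\order{2^n}$ space bound hold once the output is a single scalar rather than the full overlap vector. No gaps.
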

For the case when $\rho=\ketbra{\psi}{\psi}$ is a pure state, we can compute the stabilizer fidelity~\cite{bravyiSimulationQuantumCircuits2019} as
\begin{equation*}
    F_{\rm STAB}(\psi) \defeq \max_{\ket{\phi_j} \in \calS_n} \abs{\braket{\phi_j}{\psi}}^2 = \frac{1}{2^n} \qty(\max_{1 \leq j \leq \abs{\calS_n}} \bm{a}_j^\top \bm{b}).
\end{equation*}}
It has been recognized that the stabilizer fidelity cannot be computed with a moderate computational cost for $n>5$~\cite{haugEfficientStabilizerEntropies2023}. Meanwhile, our algorithm allowed us to compute up to $n=8$ in 4 hours.
This numerical experiment was conducted using \Cpp17 compiled by GCC 9.4.0 and a \textit{cluster computer} powered by Intel(R) Xeon(R) CPU E5-2640 v4 with \SI{270}{\giga\byte} of RAM using 40 threads.
Even if we use a \textit{laptop} powered by Intel(R) Core(TM) i7-10510U CPU with \SI{16}{\giga\byte} RAM, we can compute $n=7$ in 2 minutes using 8 threads.

\subsection{Top-Overlap Method for RoM}\label{subsec:top_K}

Using the overlap calculation subroutine presented in Sec.~\ref{subsec:overlap}, we propose a novel algorithm that computes the approximate or exact RoM by utilizing the following properties:
(i) by nature of $L^1$ norm minimization problem, the solution of the optimal stabilizer decomposition is sparse~\cite{strangLinearAlgebraLearning2019b}, and (ii) the stabilizer overlap is closely related to the optimal stabilizer decomposition (see Fig.~\ref{fig:dot_and_coeff}(a)).
Concretely, as we provide the detail in Algorithm~\ref{alg:main}, we restrict the number of columns in $\Amat_n$ and consider only a fraction $K$ of stabilizer states with the largest or smallest overlaps; the fraction of $1-K$ is neglected.
\begin{algorithm}[ht]
    \KwIn{Pauli vector $\bm{b}$ for the quantum state $\rho$, fraction $K~(0 < K \leq 1)$}
    \KwOut{An approximate or exact RoM}

    \SetKwFunction{SolveLP}{SolveLP}

    Compute the stabilizer overlaps $\Amat_n^{\top} \bm{b}$ using the FWHT algorithm.\\
    $\calC_0 \gets$ Partial set of $\calA_n$ with size of $K\abs{\calS_n}$ using the largest and smallest overlaps.\\
    $\bm{x}_0, \bm{y}_0 \gets \SolveLP(\calC_0, \bm{b})$\\
    \Return $\hat{\calR}_0(\rho) = \norm{\bm{x}_0}_1$
    
    \caption{Top-Overlap Method for RoM}
    \label{alg:main}
\end{algorithm}
\begin{figure}[t]
    \begin{center}
        \includegraphics[width=0.95\columnwidth]{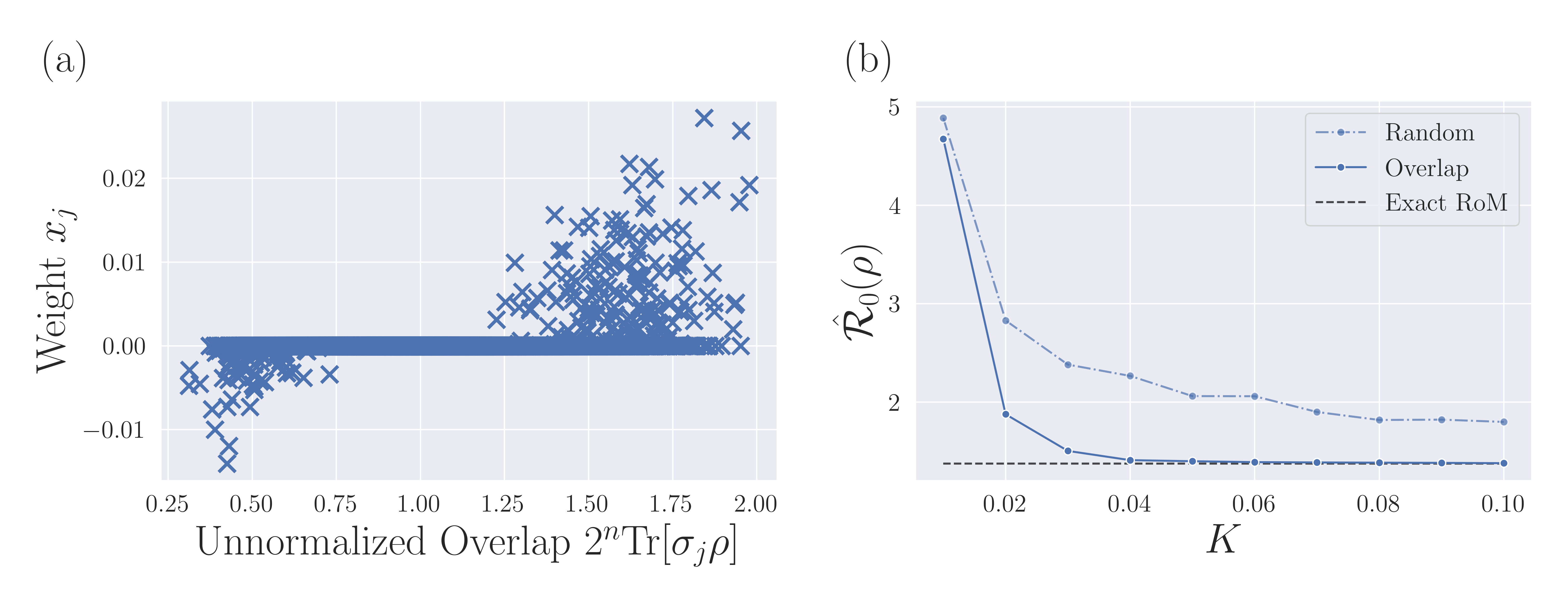}
        \caption{
        (a) Stabilizer overlaps $2^n\Tr[\sigma_j \rho]$ and the weights in the primal solution $x_j$ for a Haar random mixed 4-qubit state $\rho$.
        (b) The computed value $\hat{\calR}_0(\rho)$ in Algorithm~\ref{alg:main} with a restricted set of stabilizers $\calC_0$ of fraction $0 < K \leq 1$ for the same state $\rho$.
        }
        \label{fig:dot_and_coeff}
    \end{center}
\end{figure}

We can confirm the observation (ii) in Fig.~\ref{fig:dot_and_coeff}(a). We show the distribution of stabilizer overlaps $\bm{a}_j^\top \bm{b}=2^n\Tr[\sigma_j \rho]$ and their weights in the primal solution $x_j$ for Haar random 4-qubit mixed state $\rho$. Indeed, we find strong correspondence between the stabilizer overlaps and weights in various instances.
\black{An intuitive explanation for this property can be provided by recalling the definition of RoM in Eq.~\eqref{eq:RoMDefOrig}, in which two states in $\mathrm{STAB}_n$ are used: $\rho_+$ and $\rho_-$. As illustrated in the visualization of RoM~\cite[Appendix A]{howardApplicationResourceTheory2017a}, $\rho_+$ resembles $\rho$ as much as possible, while $\rho_-$ diverges from $\rho$ as much as possible. States with large overlaps support $\rho_+$, whereas states with small overlaps support $\rho_-$, indicating this tendency.}

As we highlight in Fig.~\ref{fig:dot_and_coeff}(b), we only need a small fraction $\calC_0$ from the entire $\calA_n$ to obtain a nearly exact solution $\hat{\calR}_0(\rho) \approx \calR(\rho)$ in Algorithm~\ref{alg:main}.
We find that, for a Haar random mixed state of $n=4$ qubit system, it is sufficient to use a fraction of $K \defeq \abs{\calC_0}/\abs{\calA_n} = 0.05$ to achieve an absolute error of 0.023. The value of $K$ required to achieve similar accuracy for larger systems are $K=10^{-2}, 10^{-3}, 10^{-5}$ for $n=5,6,7$, respectively.
Meanwhile, we must take a significantly larger column set to obtain the exact RoM; the fraction should be $K\sim 0.32$ for $n=4$ qubit case.
We provide further numerical details in Appendix~\ref{app:numerical_details}. 

\subsection{Column Generation Method for Dualized RoM} \label{subsec:column_generation}

Despite the significant improvement over the naive method, Algorithm~\ref{alg:main} contains some issues: (i) we cannot tell whether the column set $\calC_0 \subset \calA_n$ is sufficient to yield the exact RoM,
(ii) there is no quantitative measure to judge the quality of the solution, and (iii) there is a large gap between ``highly approximate'' and ``exact'' RoMs in the computational resource.
These issues are well addressed when we consider dualized formalism instead. Recall that the dualized formulation of RoM in Eq.~\eqref{eq:RoM_dual} is given by
\begin{equation*}
    \calR(\rho) = \max_{\bm{y}\in \bbR^{4^n}} \left\{ \bm{b}^{\top} \bm{y} \relmiddle| -\bm{1} \leq \Amat_n^\top \bm{y} \leq \bm{1} \right\}.
\end{equation*}
If there exists $\bm{a} \in \calA_n$ such that violates the constraint in Eq.~\eqref{eq:RoM_dual}, $|\bm{a}^{\top} \bm{y}| > 1$, then we must use those violated $\bm{a}$'s to attain the exact RoM (Also refer to Fig.~\ref{fig:imageOfDual}).
Conversely, without any violation, the solution is exact owing to the problem's strong duality.
This motivates us to derive Algorithm~\ref{alg:column_generation} using the Column Generation (CG) technique~\cite{desaulniersColumnGeneration2005}.

\begin{figure}[t]
    \centering
    \includegraphics[width=\linewidth]{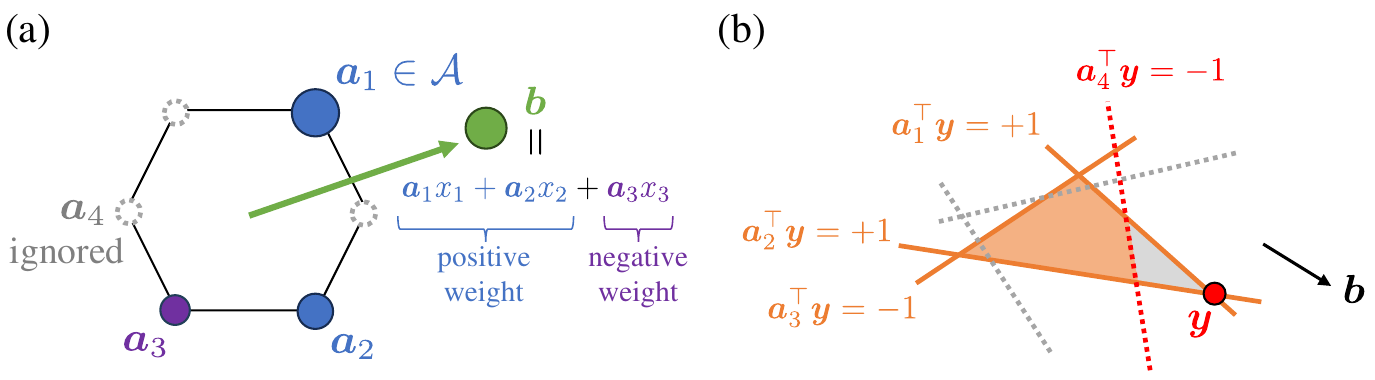}
    \caption{
    (a) Graphical description of the primal formalism. The Pauli vector $\bm{b}$ of the target state is decomposed into a sum over those of pure stabilizer states $\bm{a}$ denoted by vertices on the stabilizer polytope to minimize $\|\bm{x}\|_1$.
    The restriction on columns of $\Amat_n$ is expressed by gray vertices, which are eliminated from the decomposition.
    (b) Graphical description of the dual formalism. The equality constraints in the primal problem are now given as inequalities $-\bm{1} \leq \Amat_n^\top \bm{y} \leq \bm{1}$ for the dual variable $\bm{y}$, while the objective function is the inner product with $\bm{b}$. A solution $\bm{y}$ obtained from reduced column set $\calC$ (orange) may violate some constraints (red). So, one should add those columns to improve the solution, while some columns denoted by gray dotted lines do not affect the result.}
    \label{fig:imageOfDual}
\end{figure}
\begin{algorithm}[ht]
    \KwIn{Pauli vector $\bm{b}$ of the target state $\rho$}
    \KwOut{Exact RoM $\calR(\rho)$}

    \SetKwFunction{SolveLP}{SolveLP}
    
    $\calC_0 \gets$ Partial set of $\calA_n$ with size of $K\abs{\calS_n}$ using the largest and smallest overlaps.\\
      \For{$k = 0, 1, 2, \ldots$} {
        $\bm{x}_k, \bm{y}_k \gets \SolveLP(\calC_k,\bm{b})$\\
        $\hat{\calR}_k(\rho) \gets \norm{\bm{x}_k}_1$\\
        $\calC' \gets \left\{ \bm{a}_j \in \calA_n \relmiddle| \abs{\bm{a}_j^{\top} \bm{y}_k} > 1 \right\}$
        \Comment*[r]{Use of FWHT}
        \If {$\calC' = \emptyset$} {
            \Return $\calR(\rho)=\hat{\calR}_k(\rho)$
        }
        $\calC_{k+1} \gets \calC_k \cup \calC'$ \Comment*[r]{Two modifications in the main text}
    }
    \caption{Exact RoM Calculation by Column Generation}
    \label{alg:column_generation}
\end{algorithm}

\begin{table}[t]
\begin{center}
\begin{tabular}{c|ccc}
\hline
Qubit count $n$ & Full-size $\Amat_n$ & Our work & $K$ \\ \hline
        4 & \SI{3}{\mebi\byte} & \SI{301}{\kibi\byte} & $10^{-1}$ \\
        5 & \SI{379}{\mebi\byte} & \SI{4}{\mebi\byte} & $10^{-2}$ \\
        6 & \SI{95}{\gibi\byte} & \SI{97}{\mebi\byte} & $10^{-3}$ \\
        7 & \SI{86}{\tebi\byte} & \SI{499}{\mebi\byte} & $10^{-5}$ \\
        8 & \SI{86}{\pebi\byte} & \SI{512}{\mebi\byte} & $10^{-8}$ \\  \hline
\end{tabular}
    \caption{Memory size necessary for storing the entire $\calA_n$, i.e., the matrix $\bm{A}_n$, and the reduced matrix for $\calC_0$.
    $\calC_0$ extracts only $K\abs{\calA_n}$ columns from $\calA_n$ and is used in Algorithm~\ref{alg:main} or the initialization step in Algorithm~\ref{alg:column_generation}. This data size information relies on the sparse matrix format in SciPy~\cite{scipyScipySparseCsc_matrix}.}
    \label{table:sizeOfAn}
\end{center}
\end{table}

The CG technique means to compute the overlap $\abs{\bm{a}_j^{\top} \bm{y}_k}$ for all $\bm{a}_j \in \calA_n$ at each iteration to find and generate the columns violating the constraint. By iteratively updating $\calC_k$ until there is no violation, we obtain the exact RoM. As in Table~\ref{table:sizeOfAn}, We initialized the partial column set $\calC_0$ with $K=10^{-5}, 10^{-8}$ for $n=7, 8$, respectively.
In Fig.~\ref{fig:CG_7_8}, we present the numerical experiment results of Algorithm~\ref{alg:column_generation}.
The number of violated columns decreased rapidly, and the exact RoM was obtained after a small number of iterations. The run time was 2 hours using the laptop for $n=7$ and 2 days using the cluster computer for $n=8$.
Although the algorithm is not guaranteed to yield exact solutions within a realistic run time, we expect the exact RoM to be obtained within a similar run time for almost all cases. See Appendix~\ref{app:numerical_details} for more detailed results.

\begin{figure}[t]
    \begin{center}
        \begin{minipage}{0.49\hsize}
            \includegraphics[width=\columnwidth]{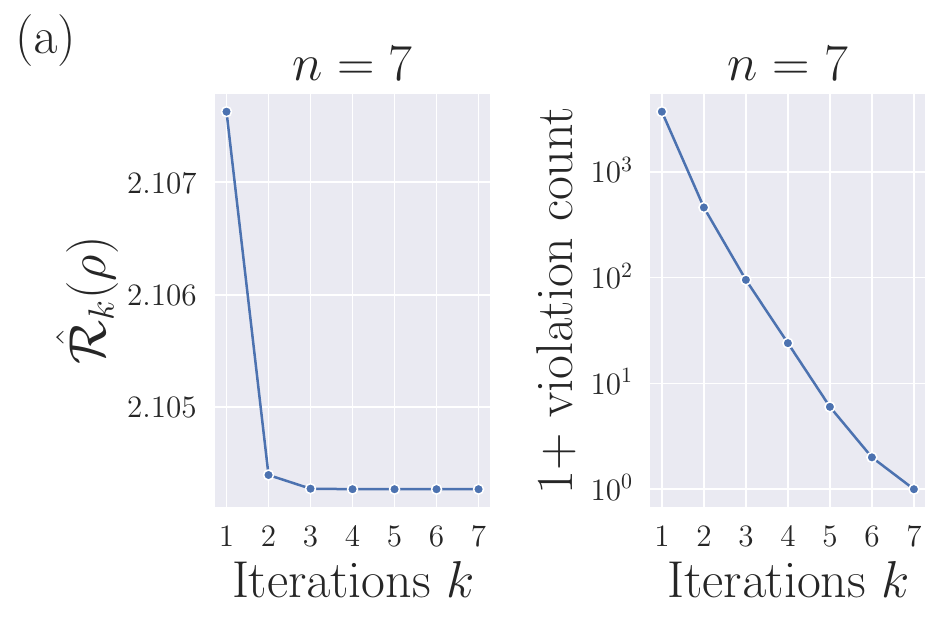}
        \end{minipage}
        \begin{minipage}{0.49\hsize}
            \includegraphics[width=\columnwidth]{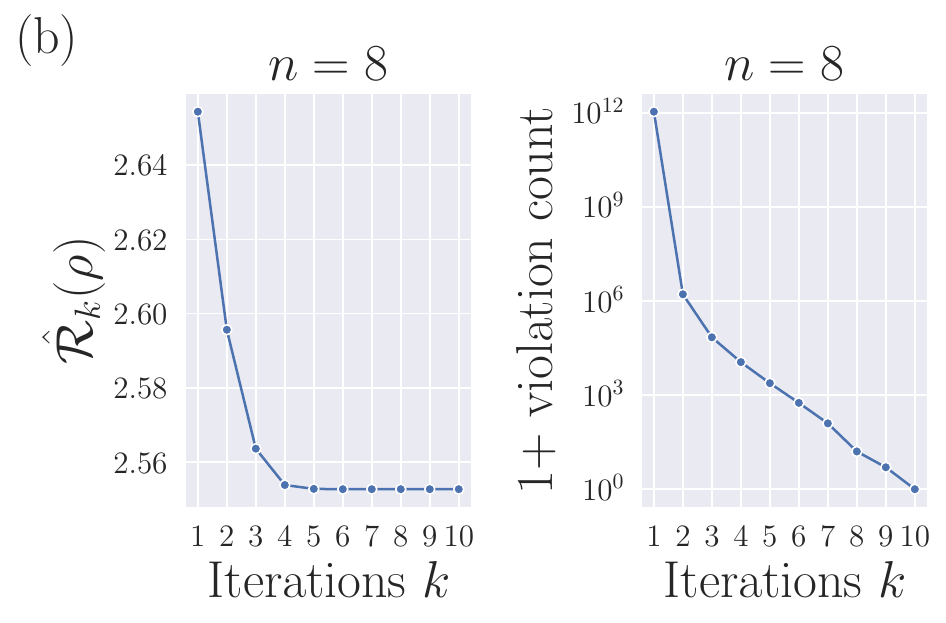}
        \end{minipage}
        \caption{
            Demonstration of Algorithm~\ref{alg:column_generation} for Haar random mixed states of (a) $n=7$ qubits and (b) $n=8$ qubits.
            We display the value of $\hat{\calR}_k(\rho)$ in Algorithm~\ref{alg:column_generation} and the number of violated inequality conditions ($\abs{\calC'}$).
            $\abs{\calC'}$ quickly decreases to zero, which assures that the exact RoM is obtained.
            }
        \label{fig:CG_7_8}
    \end{center}
\end{figure}

We remark two practical modifications in line 8 of Algorithm~\ref{alg:column_generation}.
First, we introduce a threshold $d \: (0<d<1)$, such as 0.8, to discard columns $\bm{a}_j \in \calC_k$ which satisfy both $\abs{\bm{a}_j^\top \bm{y}_k} < d$ and $(\bm{x}_k)_j = 0$. Such columns do not affect the solution with high probability.
Second, instead of adding the entire violating columns $\calC'$ to $\calC_{k+1}$, we set an upper bound on the number of the columns to add in order to suppress the memory consumption. For the calculation in Fig.~\ref{fig:CG_7_8}, we have added only $K|\mathcal{S}_n|$ columns with either the largest or smallest overlaps at each iteration, significantly suppressing the memory consumption.

\subsection{Minimal Feasible Solution With Accuracy Guarantee}\label{subsec:minimal}

Exact solutions for $n \geq 9$ qubit systems may require prohibitively enormous computational resources, while we may still wish to compute a feasible solution. In this section, we propose a method with minimal computational resources that always guarantees to yield a feasible solution. 
Consequently, we can obtain an approximate RoM with its corresponding feasible solution ${\bm x}$ for any state of $n=14$ qubits within a minute.

\black{The method's main idea is to construct a set $\calV_n=\{W_1, \dots, W_{2^n + 1}\} \subset \calW_n$ that always yields a feasible solution. 
We define a matrix $\bm{M}_n$ as
\begin{equation*}
    \bm{M}_n \defeq \mqty[W_1&\cdots&W_{2^n+1}],
\end{equation*}
and for $W_j \in \calV_n$, we denote the set of all the nonzero rows of the matrix $W_j$ by $R(W_j)$.
Then, $\texttt{Prob}(\bm{M}_n,\bm{b})$ guarantees a feasible solution if and only if $\qty{R(W_j) \mid W_j \in \calV_n}$ is a cover of all the rows, i.e., $\bigcup_{W_j \in \calV_n} R(W_j)$ equals the set of all the rows. The necessity is trivial, and the sufficiency is later discussed in Theorem~\ref{thm:cover_matrix}. We refer to $\calV_n$ as a cover set and $\bm{M}_n$ as a cover matrix.
The following proposition assures the existence of the cover set; we prove it in Appendix~\ref{app:proof_proposition_cover_set} by taking a set of mutually unbiased bases~\cite{Wootters1989Optimal, GibbonsHW2004Discrete} as $\calV_n$.
\begin{restatable}{proposition}{coverset}\label{prop:cover_set}
    A cover set $\calV_n$ with size $2^n + 1$ can be constructed.
\end{restatable}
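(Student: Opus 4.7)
The plan is to exhibit $\calV_n$ as the collection of stabilizer groups associated with a complete set of mutually unbiased bases (MUBs) in $(\bbC^2)^{\otimes n}$, and then verify the cover property by a direct counting argument. First, I would invoke the classical existence result for a complete set of $2^n + 1$ MUBs in dimension $2^n$~\cite{Wootters1989Optimal, GibbonsHW2004Discrete}. In the qubit case, each such basis $B_k$ is realized as the simultaneous eigenbasis of a maximal abelian subgroup $G_k$ of the Pauli group modulo phases, with $|G_k| = 2^n$. Translated to Pauli-group language, the defining MUB property becomes
\begin{equation*}
(G_k \setminus \{I^{\otimes n}\}) \cap (G_{k'} \setminus \{I^{\otimes n}\}) = \emptyset \quad (k \neq k'),
\end{equation*}
so that the non-identity elements of $G_0,\dots,G_{2^n}$ partition $\{I,X,Y,Z\}^{\otimes n} \setminus \{I^{\otimes n}\}$ exactly, since $(2^n+1)(2^n-1) = 4^n - 1$.

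Second, for each $G_k$ I would fix $n$ independent generators with all sign bits $\delta_i = 0$ (cf.~Eq.~\eqref{eq:binaryFormOfPi}), obtaining a check matrix $C_k$ that satisfies both Lemma~\ref{lemma:check_matrix_independent} and Lemma~\ref{lemma:check_matrix_commutativity}. By the construction in the proof of Lemma~\ref{lem:Amat_decomposition}, $C_k$ produces a matrix $W_k \in \calW_n$ whose nonzero rows $R(W_k)$ are indexed precisely by the $2^n$ Pauli operators lying in $G_k$. Setting $\calV_n \defeq \{W_0, W_1, \ldots, W_{2^n}\}$ and combining with the MUB partition yields
\begin{equation*}
\bigcup_{k=0}^{2^n} R(W_k) = \{I^{\otimes n}\} \cup \bigcup_{k=0}^{2^n} \bigl(G_k \setminus \{I^{\otimes n}\}\bigr) = \{I,X,Y,Z\}^{\otimes n},
\end{equation*}
which is the entire row index set of $\Amat_n$. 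Hence $\calV_n$ is a cover set of size $2^n + 1$.

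The main obstacle is the partition property, which amounts to the classical MUB theorem in prime-power dimensions together with the standard correspondence between MUBs in $2^n$ dimensions and maximal stabilizer subgroups of the Pauli group; once both are cited, the remainder reduces to the counting argument above and to the translation of check matrices into sparsified Walsh--Hadamard matrices already established in Lemma~\ref{lem:Amat_decomposition}. A minor verification I would include is that taking $\delta_i = 0$ for every generator ensures that the support of $W_k$ is indexed by unsigned Pauli operators, so that the combinatorial cover of stabilizer elements genuinely coincides with a cover of the rows of $\Amat_n$ used throughout Sec.~\ref{subsec:minimal}.
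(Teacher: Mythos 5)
Your proof is correct and follows essentially the same route as the paper's own argument: the paper also constructs $\calV_n$ from the $2^n+1$ mutually unbiased stabilizer bases of Gibbons et al., identifying each basis with a sparsified Walsh--Hadamard matrix whose nonzero rows are the corresponding maximal commuting set of Pauli operators together with $I^{\otimes n}$. The only cosmetic difference is direction of citation --- the paper quotes the partition of the $4^n-1$ non-identity Paulis into $2^n+1$ commuting classes as the imported fact and derives the cover from it, whereas you import the MUB existence theorem and recover the partition via the disjointness-plus-counting argument.
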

As a remark, the size $2^n+1$ is minimum to ensure the feasibility.
The nonzero-row set $R(W)$ covers $2^n$ rows, and the number of rows to be covered is $4^n$. Since all the nonzero-row sets $R(W)$ shares the first row corresponding to $I^{\otimes n}$, the size of the cover set $\calV_n$ should be equal or larger than $(4^n - 1) / (2^n - 1) = 2^n + 1$.}

\begin{figure}[t]
    \begin{center}
        \includegraphics[width=\columnwidth]{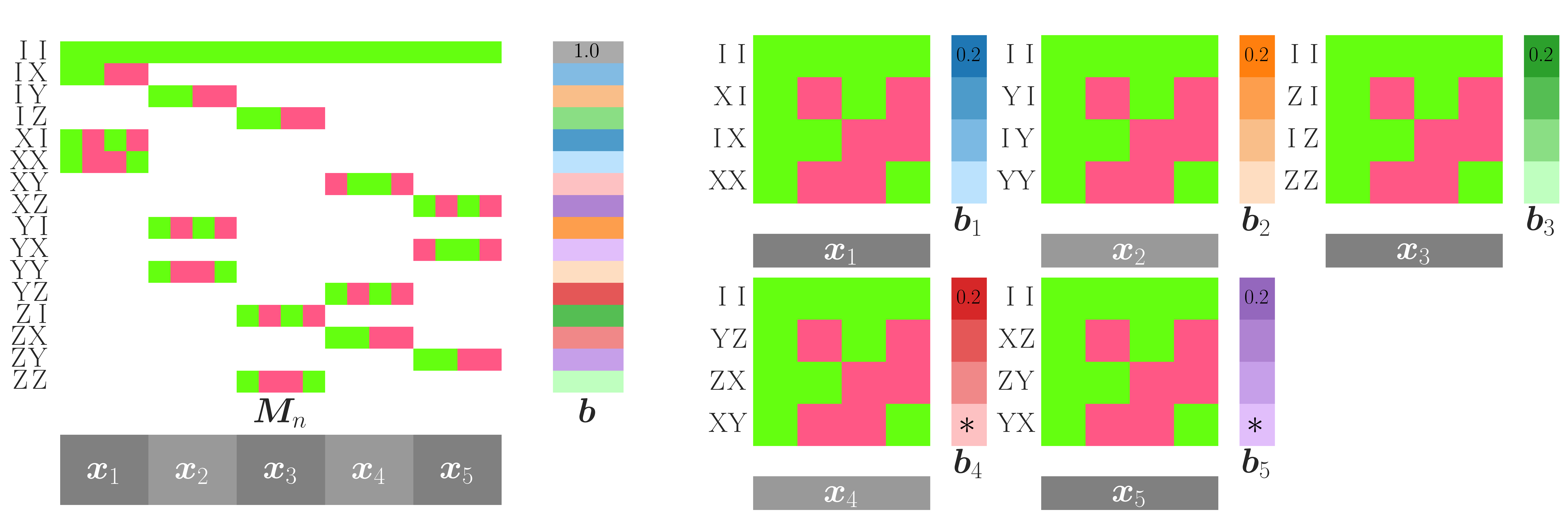}
        \caption{
        Graphical description of the cover matrix $\bm{M}_n$ and $\bm{b}_j$.
        The first element of each $\bm{b}_j$ is $1/(2^{n}+1)$, and the star mark (*) means the sign flip.
        We can execute the FWHT algorithm to solve $H_n \bm{x}_j = \bm{b}_j$.
        }
        \label{fig:illustrationOfSolutionByFWHT}
    \end{center}
\end{figure}
Fig.~\ref{fig:illustrationOfSolutionByFWHT} shows an example of the cover matrix with $n = 2$. Since the cover matrix uses the minimum possible number of blocks, only one block covers each row except for the first one. The following theorem exploits the structure of the cover matrix to obtain a feasible solution.
\black{
\begin{theorem}[Minimal Feasible Solution]\label{thm:cover_matrix}
    A feasible solution for $\texttt{Prob}(\bm{M}_n, \bm{b})$ can be obtained with time complexity of $\order{n4^n}$.
\end{theorem}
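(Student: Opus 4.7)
The plan is to exploit the fact that the cover matrix $\bm{M}_n$ has a block structure in which each row of $\bm{b}$ except the one indexed by $I^{\otimes n}$ is associated with exactly one block $W_j$. This lets us split the single linear system $\bm{M}_n \bm{x} = \bm{b}$ into $2^n+1$ independent subsystems $W_j \bm{x}_j = \bm{b}_j$, each of which reduces to a Walsh--Hadamard inversion solvable in $\order{n 2^n}$ time.

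First, I would construct the $\bm{b}_j$'s as follows. Because the sets $\{R(W_j)\}_{j=1}^{2^n+1}$ together cover every row and overlap only on the first (identity) row, each Pauli index $i \neq 1$ belongs to a unique block $j(i)$. For that $i$, put $(\bm{b}_{j(i)})_i = \pm b_i$, with the sign dictated by whether the relevant row of $W_{j(i)}$ was flipped relative to $H_n$, and leave the corresponding entry of every other $\bm{b}_{j'}$ equal to $0$. For the shared first row, since $b_1 = \Tr[I^{\otimes n} \rho] = 1$, set $(\bm{b}_j)_1 = 1/(2^n+1)$ for every $j$. By construction, $\sum_j W_j \bm{x}_j = \bm{b}$ whenever $W_j \bm{x}_j = \bm{b}_j$ for every $j$, so concatenating the $\bm{x}_j$'s gives a valid $\bm{x}$ with $\bm{M}_n \bm{x} = \bm{b}$.

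Next, I would solve each $W_j \bm{x}_j = \bm{b}_j$ individually. By definition of $\calW_n$, $W_j$ equals $H_n$ up to permuting and sign-flipping rows and then padding with zero rows. Undoing that permutation/sign-flip on $\bm{b}_j$ (which is already supported on the relevant $2^n$ rows of $W_j$) yields a length-$2^n$ vector $\tilde{\bm{b}}_j$ satisfying $H_n \bm{x}_j = \tilde{\bm{b}}_j$. Because $H_n^2 = 2^n I_{2^n}$, we simply have $\bm{x}_j = H_n \tilde{\bm{b}}_j / 2^n$, which by Lemma~\ref{lem:Walsh--Hadamard} costs $\order{n 2^n}$ time. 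Running this over all $2^n+1$ blocks gives total time $\order{n 4^n}$, and $\bm{x}$ is a feasible solution of $\bm{M}_n \bm{x} = \bm{b}$. To obtain a feasible solution of the LP form in Eq.~\eqref{eq:RoM_LP} with matrix $(\bm{M}_n \; -\bm{M}_n)$, one then splits $\bm{x}$ coordinate-wise into its positive and negative parts $\bm{u} = (\bm{x}^+, \bm{x}^-) \geq \bm{0}$, which is immediate and cost-free.

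The main obstacle is the first step: verifying that the splitting is consistent. One must check that, after accounting for the row permutations and sign flips implicit in each $W_j \in \calW_n$, summing $W_j \bm{x}_j$ over $j$ actually reproduces $\bm{b}$ on every Pauli index. For non-identity indices this is immediate from the disjointness of the $R(W_j)$ on those rows, but for the identity row one must confirm that every block contributes exactly $+1/(2^n+1)$ to that entry (i.e., that the identity row of each $W_j$ is never sign-flipped). Since in the construction of Lemma~\ref{lem:Amat_decomposition} the identity element of each stabilizer group carries coefficient $+1$, this is the case, and consistency on the first row follows from $\sum_j 1/(2^n+1) = 1 = b_1$. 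Once this bookkeeping is in hand, the complexity estimate and the feasibility claim both follow at once.
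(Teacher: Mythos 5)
Your proposal is correct and follows essentially the same route as the paper: split $\bm{b}$ into $2^n+1$ vectors $\bm{b}_j$ that agree with $\bm{b}$ on the rows covered uniquely by $W_j$, assign $1/(2^n+1)$ to the shared identity row, and invert each block via $\bm{x}_j = \frac{1}{2^n}H_n\bm{b}_j$ using the FWHT, for a total cost of $\order{(2^n+1)\,n2^n} = \order{n4^n}$. The extra bookkeeping you supply (checking the identity row is never sign-flipped, and converting $\bm{x}$ to $\bm{u}=(\bm{x}^+,\bm{x}^-)$ for the standard LP form) is consistent with, and slightly more explicit than, the paper's own argument.
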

\begin{proof}
   Our goal is to obtain a feasible solution for $\texttt{Prob}(\bm{M}_n, \bm{b})$, in other words, $\bm{x}$ such that $\bm{M}_n \bm{x} = \bm{b}$.
    This can be obtained by solving the simultaneous equations $W_j \bm{x}_j = \bm{b}_j' \: (1 \leq j \leq 2^n+1)$ where $\bm{x}_j \in \bbR^{2^n}$ and $\bm{b}_j' \in \bbR^{4^n}$ which is defined as
    \begin{equation*}
    \text{The $i$-th element of $\bm{b}_j'$} \defeq \begin{cases}
    1/(2^n+1) &\text{if $i=0$,}\\
    \text{the $i$-th element of $\bm{b}$} &\text{if $i$-th row in $R(W_j)$,}\\
    0 &\text{otherwise}.
    \end{cases}
    \end{equation*}
    Thus, by defining $\bm{b}_j \in \bbR^{2^n}$ as the nonzero part of $\bm{b}_j'$ with reordering and sign flip so that $W_n$ coincides with $H_n$, we have $H_n \bm{x}_j = \bm{b}_j \iff \bm{x}_j = \frac{1}{2^n}H_n\bm{b}_j$. By combining each $\bm{x}_j$ we can construct $\bm{x}$. The time complexity is $\order{(2^n+1) n 2^n}$ by the FWHT, i.e., $\order{n4^n}$.
\end{proof}
}

\black{The solution obtained in Theorem~\ref{thm:cover_matrix} is evidently feasible not only for $\texttt{Prob}(\bm{M}_n,\bm{b})$ but also for $\texttt{Prob}(\bm{A}_n, \bm{b})$, which provide an approximate value for the RoM, $R_\mathrm{FWHT} \defeq \norm{\bm{x}}_1$.
By utilizing the st-norm $\stnorm{\rho}=\frac{1}{2^n}\norm{\bm{b}}_1$ mentioned in Sec.~\ref{subsec:dualizedRoM}, we can derive
\begin{equation*}
    R_\mathrm{FWHT} = \sum_{j=1}^{2^n+1}\norm{\bm{x}_j}_1 = \sum_{j=1}^{2^n+1}\frac{1}{2^n}\norm{H_n \bm{b}_j}_1 \leq \sum_{j=1}^{2^n+1} \norm{\bm{b}_j}_1 = 2^n \qty(\frac{1}{2^n} \norm{\bm{b}}_1) = 2^n\stnorm{\rho} \leq 2^n\calR(\rho).
\end{equation*}
Now, let us assume that each element $b_{j,i}$ of $\bm{b}_j$ is normally distributed with a mean of 0, which roughly approximates the distribution of a Haar random state.
Then, since $\sum_{i=1}^{2^n} b_{ji}$ follows the distribution with a $2^{n/2}$ times standard deviation of $b_{ji}$, we have
\begin{equation*}
    \bbE \left[ \frac{1}{2^n} \norm{H_n \bm{b}_j}_1 \right]
    = \bbE \left[ \abs{\sum_{i=1}^{2^n} b_{j,i}} \right] 
    = 2^{n/2} \bbE \left[ \abs{b_{j,i}} \right] = \frac{1}{2^{n/2}} \bbE \left[ \norm{\bm{b}_j}_1 \right],
\end{equation*}
which allows us to anticipate that $R_\mathrm{FWHT} \approx 2^{n/2}\stnorm{\rho}$.}

\black{Empirically, we confirmed that 100 Haar random 10-qubit mixed states satisfy
\begin{equation*}
    0.994 \leq \frac{R_\mathrm{FWHT}}{2^{n/2}\stnorm{\rho}} \leq 1.002.
\end{equation*}
This indicates $R_{\mathrm{FWHT}} \leq 2^{n/2}\calR(\rho)+\order{1}$ with high probability. While $R_\mathrm{FWHT}$ deviates from the exact RoM, it is useful in certain scenarios, such as providing an initial solution for exact RoM calculation or assuring the feasibility of the algorithms (see Appendix~\ref{subsec:AssureFeasibility}).}

\section{Quantum Resource of Multiple Magic States}\label{sec:application}
One of the important applications of RoM is measuring the total nonstabilizerness of multiple magic states decoupled from each other. 
For instance, one may wish to evaluate the amount of magic resources for copies of states $\rho^{\otimes n}$ to estimate the upper bound on the number of generatable clean magic states.
This could even include situations where quantum states are nonequivalent, such as partially decoupled states $\bigotimes_i \rho_i$.
There exists a previous work by Heinrich and Gross~\cite{heinrichRobustnessMagicSymmetries2019} that has utilized the symmetry of some pure magic states such as $\ketbra{H}{H} = \frac{1}{2}\left( I + \frac{1}{\sqrt{2}}(X+Y)\right)$ and $\ketbra{F}{F} = \frac{1}{2}\left( I + \frac{1}{\sqrt{3}}(X+Y+Z)\right)$ to scale up the simulation up to $n=26$ qubits, while we still lack a method to investigate general quantum states with partial disentangled structure (see Fig.~\ref{fig:application}).

In this section, we apply the algorithms proposed in Sec.~\ref{sec:scale_up} to practical problems: copies of identical quantum states $\rho^{\otimes n}$ and partially disentangled quantum states $\bigotimes_i \rho_i.$
In particular, we first discuss the case of permutation symmetric state $\rho^{\otimes n}$ in Sec.~\ref{subsec:perm_symm}, and then also consider general partially disentangled states in Sec.~\ref{subsec:general_partial_disentangled}.

\begin{figure}[tb]
    \begin{center}
        \includegraphics[width=0.75\columnwidth]{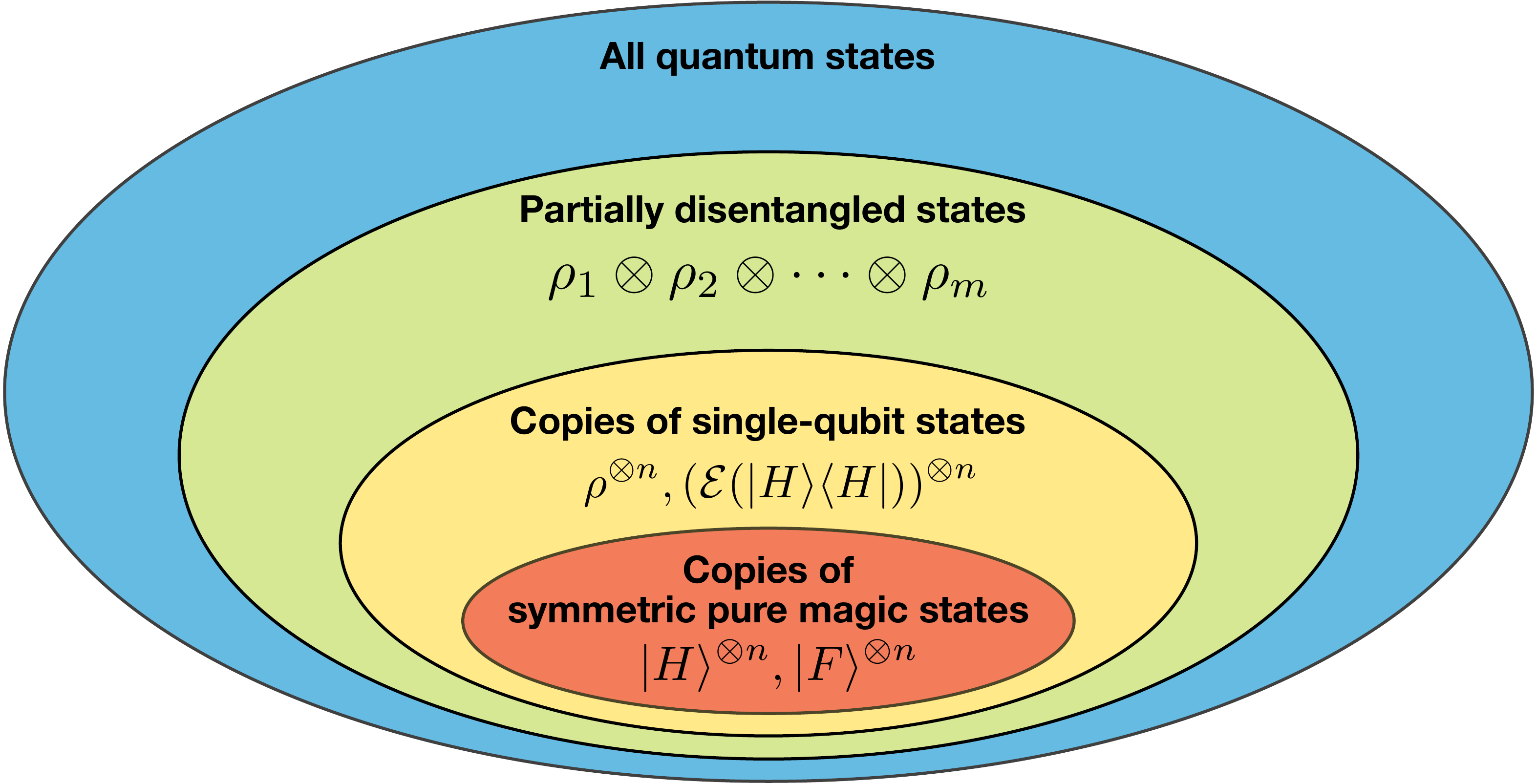}
        \caption{
        Hierarchy of partially disentangled quantum states.
        }
        \label{fig:application}
    \end{center}
\end{figure}

\begin{figure}[tb]
    \centering
    \includegraphics[width=0.8\linewidth]{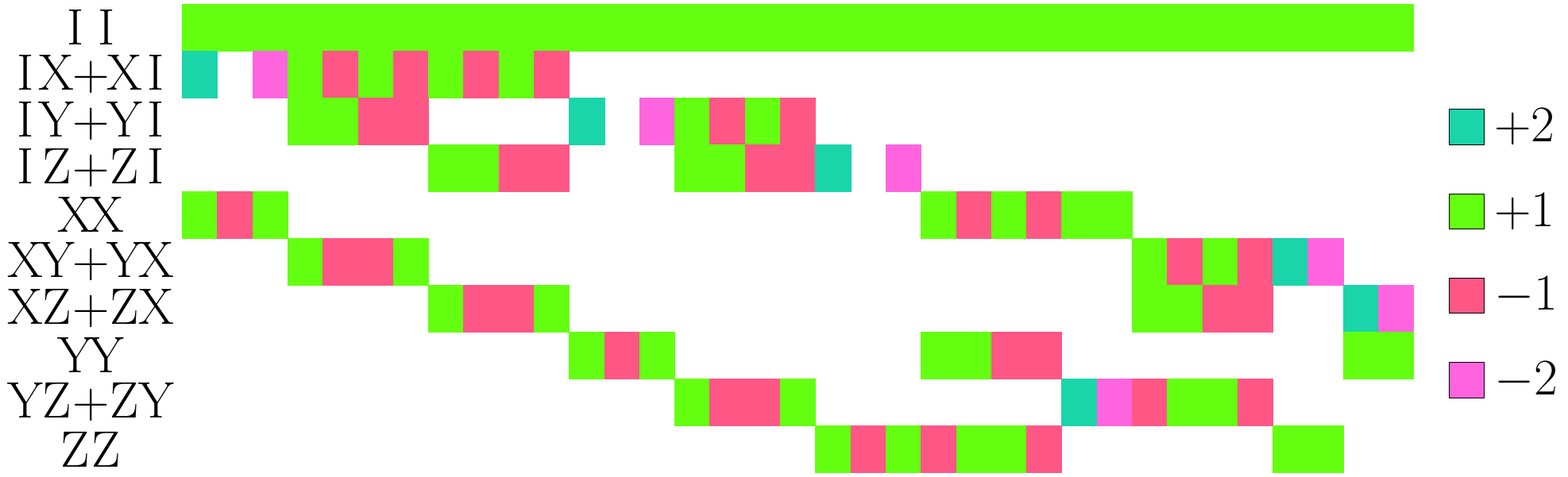}
    \caption{Visualization of $\bm{Q}_n$ for $n=2$.
    Compare with Fig.~\ref{fig:Amat_2} to confirm how the matrix is compressed.}
    \label{fig:perm}
\end{figure}

\subsection{Copies of Single-Qubit States}\label{subsec:perm_symm}
When the target quantum state is given as identical copies of a quantum state $\rho^{\otimes n}$, we may compress the size of $\Amat_n$ by utilizing the permutation symmetry to combine multiple rows and columns of $\calA_n$.
In this work, we employed the compression method for $\mathcal{A}_n$ proposed in Ref.~\cite{heinrichRobustnessMagicSymmetries2019} to define a set of permutation symmetric columns $\calQ_n$  (see Fig.~\ref{fig:perm}).
As in Ref.~\cite{heinrichRobustnessMagicSymmetries2019}, we also used the data by Danielsen~\cite{larseirikDatabaseSelfDualQuantum}, which enables us to obtain the exact solutions for $n\leq 7$ qubits.
Beyond $n=7$ or 8 qubits, although a matrix $\bm{Q}_n$ of $\calQ_n$ and $\bm{b}^{\otimes n}$ are reduced by permutation symmetry, it is not realistic to obtain the exact solution.

Therefore, we propose an approximate method with the divide-and-conquer method.
In the following, we use \texttt{Prob} and \texttt{SolveLP} for the compressed objects with the same meaning. \black{Also, the tensor product of the compressed vectors is defined as the compressed vector of the tensor product of the uncompressed operands.}
In Algorithm~\ref{alg:symmetric_reduction_approx}, we consider all possible decomposition of $m$ qubits into two groups with $j$ and $k$ qubits~($j+k=m$). Since we store the solutions for $\texttt{Prob}(\bm{Q}_i, \bm{b}^{\otimes i})$ for $i < m$, we can load the results.
We take the union of tensor product of columns with nonzero weights as $\bigcup_{j+k=m} \{\bm{q}_j \otimes \bm{q}_k \mid \bm{q}_j \in \calC_j, \bm{q}_k \in \calC_k\}$ to construct $\calC_m$, then we compute the approximate RoM with $\calC_m$, which should be less than the product of RoM computed for each subsystem.
\begin{algorithm}[ht]
    \caption{Approximate RoM for Permutation Symmetric States}
    \label{alg:symmetric_reduction_approx}
    
    \KwData{Compressed column set $\calQ_n$ of matrix $\bm{Q}_n$}
    \KwIn{Positive integer $n, k$ ($n \geq k$), Pauli vector $\bm{b}$ for the target state $\rho$}
    \KwOut{Approximate RoM $R_i$ of $\rho^{\otimes i}$  ($i = 1, \dots, n$)}
    \SetKwFunction{SolveLP}{SolveLP}    
    \For{$i \gets 1$ \KwTo $k$}{
        $\bm{x}_i, \bm{y}_i \gets \SolveLP(\calQ_i, \bm{b}^{\otimes i})$\\
        $R_i \gets \norm{\bm{x}_i}_1$\\
        $\calC_i \gets$ columns $\{\bm{q}_j\} \subseteq \calQ_i$ where $(\bm{x}_i)_j$ is nonzero.\\
    }
    \For{$i \gets k + 1$ \KwTo $n$}{
        $\calC' \gets \emptyset$\\
        \For{$l \gets 1$ \KwTo $\lfloor i / 2 \rfloor$}{
            $m \gets i - l$\\
            $\calC' \gets \calC' \cup \left\{\bm{q}_l \otimes \bm{q}_m \relmiddle| \bm{q}_l \in\calC_l, \bm{q}_m \in \calC_m\right\}$\\
        }
        $\bm{x}_i, \bm{y}_i \gets \SolveLP(\calC', \bm{b}^{\otimes i})$\\
                $R_i \gets \norm{\bm{x}_i}_1$\\
        $\calC_i \gets$ columns $\{\bm{q}_j\} \subseteq \calC'$ where $(\bm{x}_i)_j$ is nonzero.\\
    }
    \Return $(R_1, \dots, R_n)$\;
\end{algorithm}

Figure~\ref{fig:mixed_random_symmetry} shows the results of a numerical demonstration of the proposed algorithm applied to Haar random pure state, mixed state, and copies of pure magic state $\ket{H}$.
Using the exact stabilizer decomposition up to $k=7$ qubits, we have successfully computed the approximate RoM up to $n=17$ for the pure and mixed random state. For $\ket{H}$, the compressed column set size $\abs{\calC_n}$ is significantly smaller so that we have reached $n=21$. While this is not as large as $n=26$ reported in Ref.~\cite{heinrichRobustnessMagicSymmetries2019}, we emphasize that the present work is based on an algorithm that is agnostic to the internal symmetry of the single-qubit state.
As a remark, the approximate RoM for copies of pure magic states $\ket{H}$ are almost identical to those presented in Ref.~\cite{heinrichRobustnessMagicSymmetries2019}; the value was at most 1.007 times larger.

\begin{figure}[t]
    \begin{center}
        \includegraphics[width=0.85\columnwidth]{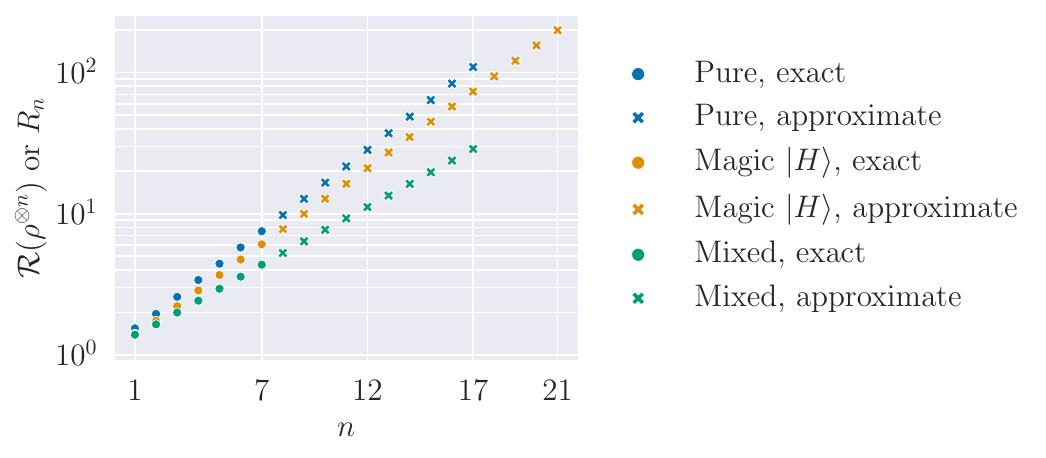}
        \caption{
        Exact $\calR(\rho^{\otimes n})$ or approximate $R_n$ computed in Algorithm~\ref{alg:symmetric_reduction_approx}.
        The blue, orange, and green data correspond to the Haar random pure state, the magic state $\ket{H}$, and the Haar random mixed state, respectively.
        The circle or crossed points indicate whether the solution is exact or approximate. The approximate values are computed from the exact solutions for $n\leq 7$ qubits.
        }
        \label{fig:mixed_random_symmetry}
    \end{center}
\end{figure}

\subsection{Partially Disentangled States} \label{subsec:general_partial_disentangled}

Let us assume that we are interested in a general partially decoupled state of $m$ subsystems as $\rho = \bigotimes_{i=1}^m \rho_i$, where $\sum_{i=1}^m n_i = n$ with $n_i$ being the qubit count of $i$-th subsystem.
Similar to the previous section, we may first compute the optimal solutions for each subsystem and then take tensor products over nonzero-weight stabilizers to construct the reduced basis for the total system.
Although the computed value is not necessarily the exact RoM, this is an optimal solution in the following meaning. Refer to Appendix~\ref{app:proof_proposition_multi} for the proof.
\begin{restatable}{proposition}{OptimalityForTheRestrictedProblem}\label{prop:multiplicativity}
    Let $\rho_{i}$ be given for the $i$-th subsystem, 
    $\calR(\rho_i)$ and $\bm{x}_i$ be the optimal solution and primal variable for $\texttt{Prob}(\Amat_{n_i}, \bm{b}_i)$.
    Then,
    \begin{equation*}
        R = \prod_{i=1}^m \calR(\rho_i), \ \ \bm{x} = \bigotimes_{i=1}^m \bm{x}_i,
    \end{equation*}
    is one of the optimal solutions for $\texttt{Prob}(\bigotimes_{i=1}^m \Amat_{n_i}, \bm{b})$.
\end{restatable}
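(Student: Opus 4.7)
The plan is to reduce the claim to two ingredients: feasibility plus cost of the candidate $\bm{x}=\bigotimes_{i=1}^{m} \bm{x}_i$, and a matching lower bound via LP duality.

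First, feasibility. By the mixed-product identity for Kronecker products,
\begin{equation*}
\Bigl(\bigotimes_{i=1}^{m} \Amat_{n_i}\Bigr)\Bigl(\bigotimes_{i=1}^{m} \bm{x}_i\Bigr)=\bigotimes_{i=1}^{m} (\Amat_{n_i}\bm{x}_i)=\bigotimes_{i=1}^{m} \bm{b}_i=\bm{b},
\end{equation*}
so $\bm{x}$ is feasible for $\texttt{Prob}(\bigotimes_i \Amat_{n_i},\bm{b})$. The $\ell^1$ cost factorizes entrywise as $\|\bm{x}\|_1=\prod_i \|\bm{x}_i\|_1=\prod_i \calR(\rho_i)=R$, so the candidate attains objective value $R$ in the restricted problem.

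Second, the matching lower bound. For each subsystem, strong duality applied to $\texttt{Prob}(\Amat_{n_i},\bm{b}_i)$ produces a dual optimizer $\bm{y}_i\in\bbR^{4^{n_i}}$ satisfying $\bm{b}_i^{\top}\bm{y}_i=\calR(\rho_i)$ and $|\bm{a}^{\top}\bm{y}_i|\le 1$ for every column $\bm{a}$ of $\Amat_{n_i}$. I would set $\bm{z}=\bigotimes_i \bm{y}_i$. Any column of the restricted matrix $\bigotimes_i \Amat_{n_i}$ has the form $\bigotimes_i \bm{a}_{i,j_i}$, and by multiplicativity of Kronecker inner products
\begin{equation*}
\biggl|\Bigl(\bigotimes_{i=1}^{m} \bm{a}_{i,j_i}\Bigr)^{\!\top}\bm{z}\biggr|=\prod_{i=1}^{m} |\bm{a}_{i,j_i}^{\top}\bm{y}_i|\le 1,
\end{equation*}
so $\bm{z}$ is dual-feasible for $\texttt{Prob}(\bigotimes_i \Amat_{n_i},\bm{b})$ with dual value $\bm{b}^{\top}\bm{z}=\prod_i \bm{b}_i^{\top}\bm{y}_i=R$. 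Weak LP duality then forces every primal feasible solution of the restricted problem to have $\ell^1$ cost at least $R$, matching the candidate's cost and establishing optimality.

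The only subtle point is bookkeeping between the two equivalent primal formulations. The statement uses a signed decomposition $\bm{x}$ as in Eq.~\eqref{eq:primal_RoM}, whereas $\texttt{Prob}(\cdot,\cdot)$ is defined through the non-negative LP of Eq.~\eqref{eq:RoM_LP} via the splitting $\bm{x}=\bm{u}_+ - \bm{u}_-$. One must check that this splitting commutes with tensor products in a way that preserves the identity $\|\bigotimes_i \bm{x}_i\|_1=\prod_i \|\bm{x}_i\|_1$, which ultimately reduces to $|ab|=|a|\,|b|$ applied entrywise. Once this is settled, the duality computation above is routine and no deeper structural fact is required.
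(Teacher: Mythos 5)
Your proposal is correct and follows essentially the same route as the paper's proof in Appendix~\ref{app:proof_proposition_multi}: feasibility and cost factorization of the tensor-product primal candidate, then a matching lower bound from the tensor product of subsystem dual optimizers, whose dual feasibility is exactly the multiplicativity of the $L^\infty$ constraint over Kronecker-product columns. Your closing remark on the $\bm{u}_+ - \bm{u}_-$ splitting is a reasonable extra care point that the paper handles implicitly by working with the signed formulation throughout.
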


By noting the submultiplicativity of RoM~\cite{howardApplicationResourceTheory2017a}, i.e., $\calR(\bigotimes_i \rho_i) \leq \prod_{i=1}^m \calR(\rho_i) = R$, it is natural to expect that one can further improve the approximate RoM by combining some of the systems and solving it directly. For instance, one may group several subsystems so that each group consists of 6 or 7 qubits, compute the RoM for these systems using the CG method, and multiply them to obtain a better approximation of the RoM.

We find that it is more effective to consider multiple variations to divide subsystems. As shown in Algorithm~\ref{alg:RoM_divide}, we divide subsystems into various kinds of groups. By comparing the product of those values, we can take the minimal value as the approximate RoM.
For instance, we consider the case of a 15-qubit system that is decoupled into 5 subsystems as $\rho=\rho_1\otimes \rho_2\otimes \rho_3\otimes \rho_4\otimes \rho_5$, where each $\rho_i$ is a 3-qubit state. One may compute the approximate value as $\calR(\rho_1 \otimes \rho_2) \times \calR(\rho_3 \otimes \rho_4) \times \calR(\rho_5)$ or $\calR(\rho_1) \times \calR(\rho_2\otimes \rho_3) \times \calR(\rho_4 \otimes \rho_5)$ and take the minimal value as the approximate output.
One could also speed up the computation by brute-force parallelization.

\begin{algorithm}[ht]
    \KwIn{Target state $\rho = \bigotimes_{i=1}^{m}\rho_i$}
    \KwOut{Approximate RoM}
    $R \gets \prod_{i=1}^{m} \calR(\rho_i)$\\
    \ForEach{\rm{Partition of a set $\{i\}_{i=1}^{n}$ to $\bigcup_{j} I_j$}}{
        $R \gets \min\qty(R, \prod_{j} {\calR(\bigotimes_{i \in I_j} \rho_i)})$
    }
    \Return $R$\;
    \caption{Optimization via Subsystem Division}
    \label{alg:RoM_divide}
\end{algorithm}

\section{Discussion}\label{sec:discussion}
In this work, we have introduced a systematic procedure for computing RoM. This procedure has proven highly effective, surpassing the state-of-the-art results for various quantum states, including random arbitrary states, multiple copies of single-qubit magic states, and partially disentangled quantum states.
We have presented the core subroutine capable of computing the overlaps, algorithms to compute exact RoM for arbitrary quantum states, and algorithms to incorporate the nature of the target quantum state, such as the permutation symmetry between multiple copies of a state and the partially decoupled structure for inhomogeneous magic resources.

Numerous future directions can be envisioned.
First, it is intriguing to seek generalization to other quantum resource measures.
\black{While the optimization strategy, namely the column generation technique, can be widely applied to $L^p$ optimization problems, the overlap calculation in this work has heavily exploited the property of qubit stabilizer states. 
This means that, our technique may be applied to the case of robustness-like measures for multiqubit systems (such as channel robustness~\cite{seddonQuantifyingMagicMultiqubit2019} and dyadic negativity~\cite{seddonQuantifyingQuantumSpeedups2021}), whereas it is nontrivial whether there exists a fast overlap computation method for  multiqudit case. Considering that generalized Pauli operators for qudits constitute a complete basis for operator space, we expect that algorithm similar to the Fast Walsh-Hadamard Transform (FWHT) is applicable, thus speeding up the evaluation of negativity in multiqudit systems~\cite{pashayanEstimatingOutcomeProbabilities2015}. We note that a recent work on stabilizer extent has shown a technique that does not rely on FWHT~\cite{hamaguchiFasterComputation2024}, which could be a promising direction for rank-based monotones.}
Furthermore, it is nontrivial if we can extend the framework when the pure free states constitute a continuous set, such as in the case of fermionic non-Gaussianity~\cite{diasClassicalSimulationNonGaussian2023, reardon-smithImprovedSimulationQuantum2023, cudbyGaussianDecompositionMagic2023}.
Second, it is interesting to investigate whether it is possible to scale further computations for weakly decoupled states such as tensor network states. While exact computation may require as costly calculation as in the generic case, we may perform approximate computation with an accuracy bound.

\section*{Acknowledgments}
We thank T. Oki for his valuable comments on the manuscript, \black{and R. Takagi, B. Regula, and J. Seddon for fruitful discussions.}
N.Y. wishes to thank JST PRESTO No. JPMJPR2119 and the support from IBM Quantum. % Yoshioka
This work was supported by JST Grant Number JPMJPF2221. %COI-NEXT
This work was supported by JST ERATO Grant Number  JPMJER2302 and JST CREST Grant Number JPMJCR23I4, Japan.  %ERATO, CREST

\black{
\textit{Note added.---}
Hantzko et al.~\cite{hantzko2023tensorized}, and we independently proposed Pauli decomposition algorithms that share the essence of recursively slicing and transforming a density matrix. Their first preprint was uploaded about two weeks before our first preprint was uploaded. After T.~Jones informed us of their Pauli decomposition algorithm along with a Python benchmark~\cite{jones2024DensePauliDecomposer}, we worked on the \Cpp{} benchmark shown in Appendix~\ref{subapp:pauli_vec_innerproduct}.
}

\bibliographystyle{quantum}
\bibliography{robustnessOfMagic.bib,robustnessOfMagic2.bib}

\begin{thebibliography}{10}

\bibitem{gottesmanHeisenbergRepresentationQuantum1998}
Daniel Gottesman.
\newblock ``The {{Heisenberg Representation}} of {{Quantum Computers}}''~(1998).
\newblock  \href{http://arxiv.org/abs/quant-ph/9807006}{arXiv:quant-ph/9807006}.

\bibitem{nielsenQuantumComputationQuantum2010}
Michael~A. Nielsen and Isaac~L. Chuang.
\newblock ``Quantum {{Computation}} and {{Quantum Information}}: 10th {{Anniversary Edition}}''.
\newblock \href{https://dx.doi.org/10.1017/CBO9780511976667}{Cambridge University Press}. ~(2010).

\bibitem{bravyiUniversalQuantumComputation2005}
Sergei Bravyi and Alexei Kitaev.
\newblock ``Universal {{Quantum Computation}} with ideal {{Clifford}} gates and noisy ancillas''.
\newblock \href{https://dx.doi.org/10.1103/PhysRevA.71.022316}{Physical Review A {\bf 71}, 022316}~(2005).

\bibitem{litinskiGameSurfaceCodes2019}
Daniel Litinski.
\newblock ``A {{Game}} of {{Surface Codes}}: {{Large-Scale Quantum Computing}} with {{Lattice Surgery}}''.
\newblock \href{https://dx.doi.org/10.22331/q-2019-03-05-128}{Quantum {\bf 3}, 128}~(2019).

\bibitem{horsmanSurfaceCodeQuantum2012}
Dominic Horsman, Austin~G Fowler, Simon Devitt, and Rodney~Van Meter.
\newblock ``Surface code quantum computing by lattice surgery''.
\newblock \href{https://dx.doi.org/10.1088/1367-2630/14/12/123011}{New Journal of Physics {\bf 14}, 123011}~(2012).

\bibitem{fowlerLowOverheadQuantum2019}
Austin~G. Fowler and Craig Gidney.
\newblock ``Low overhead quantum computation using lattice surgery''~(2019).
\newblock  \href{http://arxiv.org/abs/1808.06709}{arXiv:1808.06709}.

\bibitem{Veitch_2012}
Victor Veitch, Christopher Ferrie, David Gross, and Joseph Emerson.
\newblock ``Negative quasi-probability as a resource for quantum computation''.
\newblock \href{https://dx.doi.org/10.1088/1367-2630/14/11/113011}{New Journal of Physics {\bf 14}, 113011}~(2012).

\bibitem{gidneyHowFactor20482021}
Craig Gidney and Martin Eker{\aa}.
\newblock ``How to factor 2048 bit {{RSA}} integers in 8 hours using 20 million noisy qubits''.
\newblock \href{https://dx.doi.org/10.22331/q-2021-04-15-433}{Quantum {\bf 5}, 433}~(2021).

\bibitem{leeEvenMoreEfficient2021}
Joonho Lee, Dominic~W. Berry, Craig Gidney, William~J. Huggins, Jarrod~R. McClean, Nathan Wiebe, and Ryan Babbush.
\newblock ``Even {{More Efficient Quantum Computations}} of {{Chemistry Through Tensor Hypercontraction}}''.
\newblock \href{https://dx.doi.org/10.1103/PRXQuantum.2.030305}{PRX Quantum {\bf 2}, 030305}~(2021).

\bibitem{vonburgQuantumComputingEnhanced2021}
Vera {von Burg}, Guang~Hao Low, Thomas H{\"a}ner, Damian~S. Steiger, Markus Reiher, Martin Roetteler, and Matthias Troyer.
\newblock ``Quantum computing enhanced computational catalysis''.
\newblock \href{https://dx.doi.org/10.1103/PhysRevResearch.3.033055}{Physical Review Research {\bf 3}, 033055}~(2021).

\bibitem{yoshiokaHuntingQuantumclassicalCrossover2023}
Nobuyuki Yoshioka, Tsuyoshi Okubo, Yasunari Suzuki, Yuki Koizumi, and Wataru Mizukami.
\newblock ``Hunting for quantum-classical crossover in condensed matter problems''.
\newblock \href{https://dx.doi.org/10.1038/s41534-024-00839-4}{npj Quantum Information {\bf 10}, 45}~(2024).

\bibitem{bravyiTradingClassicalQuantum2016}
Sergey Bravyi, Graeme Smith, and John Smolin.
\newblock ``Trading classical and quantum computational resources''.
\newblock \href{https://dx.doi.org/10.1103/PhysRevX.6.021043}{Physical Review X {\bf 6}, 021043}~(2016).

\bibitem{tirritoQuantifyingNonstabilizernessEntanglement2023}
Emanuele Tirrito, Poetri~Sonya Tarabunga, Gugliemo Lami, Titas Chanda, Lorenzo Leone, Salvatore F.~E. Oliviero, Marcello Dalmonte, Mario Collura, and Alioscia Hamma.
\newblock ``Quantifying nonstabilizerness through entanglement spectrum flatness''.
\newblock \href{https://dx.doi.org/10.1103/PhysRevA.109.L040401}{Physical Review A: Atomic, Molecular, and Optical Physics {\bf 109}, L040401}~(2024).

\bibitem{hahnQuantifyingQubitMagic2022}
Oliver Hahn, Alessandro Ferraro, Lina Hultquist, Giulia Ferrini, and Laura {Garc{\'i}a-{\'A}lvarez}.
\newblock ``Quantifying {{Qubit Magic Resource}} with {{Gottesman-Kitaev-Preskill Encoding}}''.
\newblock \href{https://dx.doi.org/10.1103/PhysRevLett.128.210502}{Physical Review Letters {\bf 128}, 210502}~(2022).

\bibitem{haugEfficientStabilizerEntropies2023}
Tobias Haug, Soovin Lee, and M.~S. Kim.
\newblock ``Efficient quantum algorithms for stabilizer entropies''.
\newblock \href{https://dx.doi.org/10.1103/PhysRevLett.132.240602}{Phys. Rev. Lett. {\bf 132}, 240602}~(2024).

\bibitem{seddonQuantifyingQuantumSpeedups2021}
James~R. Seddon, Bartosz Regula, Hakop Pashayan, Yingkai Ouyang, and Earl~T. Campbell.
\newblock ``Quantifying {{Quantum Speedups}}: {{Improved Classical Simulation From Tighter Magic Monotones}}''.
\newblock \href{https://dx.doi.org/10.1103/PRXQuantum.2.010345}{PRX Quantum {\bf 2}, 010345}~(2021).

\bibitem{liuManybodyQuantumMagic2022}
Zi-Wen Liu and Andreas Winter.
\newblock ``Many-body quantum magic''.
\newblock \href{https://dx.doi.org/10.1103/PRXQuantum.3.020333}{PRX Quantum {\bf 3}, 020333}~(2022).

\bibitem{leoneStabilizerEnyiEntropy2022}
Lorenzo Leone, Salvatore F.~E. Oliviero, and Alioscia Hamma.
\newblock ``Stabilizer {{R{\'e}nyi Entropy}}''.
\newblock \href{https://dx.doi.org/10.1103/PhysRevLett.128.050402}{Physical Review Letters {\bf 128}, 050402}~(2022).

\bibitem{howardApplicationResourceTheory2017a}
Mark Howard and Earl~T. Campbell.
\newblock ``Application of a resource theory for magic states to fault-tolerant quantum computing''.
\newblock \href{https://dx.doi.org/10.1103/PhysRevLett.118.090501}{Physical Review Letters {\bf 118}, 090501}~(2017).

\bibitem{pashayanEstimatingOutcomeProbabilities2015}
Hakop Pashayan, Joel~J. Wallman, and Stephen~D. Bartlett.
\newblock ``Estimating outcome probabilities of quantum circuits using quasiprobabilities''.
\newblock \href{https://dx.doi.org/10.1103/PhysRevLett.115.070501}{Physical Review Letters {\bf 115}, 070501}~(2015).

\bibitem{heinrichRobustnessMagicSymmetries2019}
Markus Heinrich and David Gross.
\newblock ``Robustness of {{Magic}} and {{Symmetries}} of the {{Stabiliser Polytope}}''.
\newblock \href{https://dx.doi.org/10.22331/q-2019-04-08-132}{Quantum {\bf 3}, 132}~(2019).

\bibitem{bravyiSimulationQuantumCircuits2019}
Sergey Bravyi, Dan Browne, Padraic Calpin, Earl Campbell, David Gosset, and Mark Howard.
\newblock ``Simulation of quantum circuits by low-rank stabilizer decompositions''.
\newblock \href{https://dx.doi.org/10.22331/q-2019-09-02-181}{Quantum {\bf 3}, 181}~(2019).

\bibitem{aaronsonImprovedSimulationStabilizer2004}
Scott Aaronson and Daniel Gottesman.
\newblock ``Improved {{Simulation}} of {{Stabilizer Circuits}}''.
\newblock \href{https://dx.doi.org/10.1103/PhysRevA.70.052328}{Physical Review A {\bf 70}, 052328}~(2004).

\bibitem{garciaGeometryStabilizerStates2017}
H{\'e}ctor~J. Garc{\'{\i}}a, Igor~L. Markov, and Andrew~W. Cross.
\newblock ``On the geometry of stabilizer states''.
\newblock \href{https://dx.doi.org/10.48550/arXiv.1711.07848}{Quantum Info. Comput. {\bf 14}, 683--720}~(2014).

\bibitem{seddonQuantifyingMagicMultiqubit2019}
James Seddon and Earl Campbell.
\newblock ``Quantifying magic for multi-qubit operations''.
\newblock \href{https://dx.doi.org/10.1098/rspa.2019.0251}{Proceedings of the Royal Society A: Mathematical, Physical and Engineering Sciences {\bf 475}, 20190251}~(2019).

\bibitem{gurobi}
{Gurobi Optimization, LLC}.
\newblock ``{Gurobi Optimizer Reference Manual}''~(2023).

\bibitem{10.5555/2946645.3007036}
Steven Diamond and Stephen Boyd.
\newblock ``{{CVXPY}}: A python-embedded modeling language for convex optimization''.
\newblock \href{https://dx.doi.org/10.5555/2946645.3007036}{Journal of Machine Learning Research {\bf 17}, 2909--2913}~(2016).

\bibitem{agrawal2018rewriting}
Akshay Agrawal, Robin Verschueren, Steven Diamond, and Stephen Boyd.
\newblock ``A rewriting system for convex optimization problems''.
\newblock \href{https://dx.doi.org/10.1080/23307706.2017.1397554}{Journal of Control and Decision {\bf 5}, 42--60}~(2018).

\bibitem{finoUnifiedMatrixTreatment1976}
{Fino} and {Algazi}.
\newblock ``Unified {{Matrix Treatment}} of the {{Fast Walsh-Hadamard Transform}}''.
\newblock \href{https://dx.doi.org/10.1109/TC.1976.1674569}{IEEE Transactions on Computers {\bf C-25}, 1142--1146}~(1976).

\bibitem{kacQuantumCalculus2002}
Victor Kac and Pokman Cheung.
\newblock ``Quantum {{Calculus}}''.
\newblock \href{https://dx.doi.org/10.1007/978-1-4613-0071-7}{Universitext}. Springer. ~(2002).

\bibitem{struchalinExperimentalEstimationQuantum2021a}
G.I. Struchalin, {\relax Ya}.~A. Zagorovskii, E.V. Kovlakov, S.S. Straupe, and S.P. Kulik.
\newblock ``Experimental {{Estimation}} of {{Quantum State Properties}} from {{Classical Shadows}}''.
\newblock \href{https://dx.doi.org/10.1103/PRXQuantum.2.010307}{PRX Quantum {\bf 2}, 010307}~(2021).

\bibitem{yoshiokaQuantumprogrammingRoMhandbookHandbook2023}
Hiroki Hamaguchi, Kou Hamada, and Nobuyuki Yoshioka.
\newblock ``{RoM-handbook}''.
\newblock \href{https://github.com/quantum-programming/RoM-handbook}{GitHub Repository}~(2024).

\bibitem{strangLinearAlgebraLearning2019b}
Gilbert Strang.
\newblock ``Linear algebra and learning from data''.
\newblock \href{https://dx.doi.org/10.1137/1.9780692196380}{Wellesley-Cambridge Press}. ~(2019).

\bibitem{desaulniersColumnGeneration2005}
Guy Desaulniers, Jacques Desrosiers, and Marius M.~Solomon Solomon, editors.
\newblock ``Column {{Generation}}''.
\newblock \href{https://dx.doi.org/10.1007/b135457}{Springer US}. ~(2005).

\bibitem{scipyScipySparseCsc_matrix}
SciPy.
\newblock ``scipy.sparse.csc\_matrix''.
\newblock \href{https://docs.scipy.org/doc/scipy/reference/generated/scipy.sparse.csc\_matrix.html}{SciPy}.

\bibitem{Wootters1989Optimal}
William~K Wootters and Brian~D Fields.
\newblock ``Optimal state-determination by mutually unbiased measurements''.
\newblock \href{https://dx.doi.org/10.1016/0003-4916(89)90322-9}{Annals of Physics {\bf 191}, 363--381}~(1989).

\bibitem{GibbonsHW2004Discrete}
Kathleen~S. Gibbons, Matthew~J. Hoffman, and William~K. Wootters.
\newblock ``Discrete phase space based on finite fields''.
\newblock \href{https://dx.doi.org/10.1103/PhysRevA.70.062101}{Physical Review A: Atomic, Molecular, and Optical Physics {\bf 70}, 062101}~(2004).

\bibitem{larseirikDatabaseSelfDualQuantum}
Danielsen Lars~Eirik.
\newblock ``Database of self-dual quantum codes''.
\newblock \href{http://www.ii.uib.no/{\textasciitilde}larsed/vncorbits/}{link}.

\bibitem{hamaguchiFasterComputation2024}
Hiroki Hamaguchi, Kou Hamada, Naoki Marumo, and Nobuyuki Yoshioka.
\newblock ``Faster computation of stabilizer extent''~(2024).
\newblock  \href{http://arxiv.org/abs/2406.16673}{arXiv:2406.16673}.

\bibitem{diasClassicalSimulationNonGaussian2023}
Beatriz Dias and Robert Koenig.
\newblock ``Classical simulation of non-{{Gaussian}} fermionic circuits''.
\newblock \href{https://dx.doi.org/10.22331/q-2024-05-21-1350}{Quantum {\bf 8}, 1350}~(2024).

\bibitem{reardon-smithImprovedSimulationQuantum2023}
Oliver {Reardon-Smith}, Micha{\l} Oszmaniec, and Kamil Korzekwa.
\newblock ``Improved simulation of quantum circuits dominated by free fermionic operations''~(2024).
\newblock  \href{http://arxiv.org/abs/2307.12702}{arXiv:2307.12702}.

\bibitem{cudbyGaussianDecompositionMagic2023}
Joshua Cudby and Sergii Strelchuk.
\newblock ``Gaussian decomposition of magic states for matchgate computations''~(2023).
\newblock  \href{http://arxiv.org/abs/2307.12654}{arXiv:2307.12654}.

\bibitem{hantzko2023tensorized}
Lukas Hantzko, Lennart Binkowski, and Sabhyata Gupta.
\newblock ``Tensorized pauli decomposition algorithm''.
\newblock \href{https://dx.doi.org/10.1088/1402-4896/ad6499}{Physica Scripta {\bf 99}, 085128}~(2024).

\bibitem{jones2024DensePauliDecomposer}
Tyson Jones.
\newblock ``Densepaulidecomposer''.
\newblock \href{https://github.com/TysonRayJones/DensePauliDecomposer}{GitHub Repository}~(2024).

\bibitem{jones2024decomposing}
Tyson Jones.
\newblock ``Decomposing dense matrices into dense {{Pauli}} tensors''~(2024).
\newblock  \href{http://arxiv.org/abs/2401.16378}{arXiv:2401.16378}.

\bibitem{vidal2023paulicomposer}
Sebasti{\'a}n Vidal~Romero and Juan {Santos-Su{\'a}rez}.
\newblock ``{{PauliComposer}}: Compute tensor products of {{Pauli}} matrices efficiently''.
\newblock \href{https://dx.doi.org/10.1007/s11128-023-04204-w}{Quantum Information Processing {\bf 22}, 449}~(2023).

\bibitem{paulidecomp}
Hiroki Hamaguchi, Kou Hamada, and Nobuyuki Yoshioka.
\newblock ``{paulidecomp}''.
\newblock \href{https://github.com/quantum-programming/paulidecomp}{GitHub Repository}~(2024).

\bibitem{jaegerDirectSumsTensor1964}
Arno Jaeger and Bertram Mond.
\newblock ``On direct sums and tensor products of linear programs''.
\newblock \href{https://dx.doi.org/10.1007/BF00531681}{Zeitschrift f{\"u}r Wahrscheinlichkeitstheorie und Verwandte Gebiete {\bf 3}, 19--31}~(1964).

\end{thebibliography}

\appendix

\section{Pseudocode of Fast Walsh--Hadamard Transform Algorithm} \label{app:fwht_pseudocode}
Here, we provide the pseudocode of the Fast Walsh--Hadamard Transform (FWHT) algorithm, which computes $H_n v = H_n^\top v = \mqty(1&1\\1&-1)^{\otimes n} v$ for $v\in \bbR^{2^n}$. We omitted the normalization factor in the pseudocode since the Walsh--Hadamard matrix $H_n$ itself is unnormalized.
The time complexity is $\order{n 2^n}$ and the space complexity is $\order{2^n}$, since this is a in-place algorithm.

\begin{algorithm}[ht]
    \SetKwFunction{FWHT}{FWHT}
    \SetAlgoLined
    \SetKw{KwBy}{by}
    \KwIn{$v \in \mathbb{R}^{2^n}$}
    \KwOut{In-place computation result of $H_n v$}
    \SetKwProg{Fn}{Function}{}{}
    \Fn{\FWHT{$v$}}{
        $h \leftarrow 1$ \;
        \While{$h < 2^n$}{
            \For{$i \leftarrow 0$ \KwTo $2^n-2h$ \KwBy $2h$}{
                \For{$j \leftarrow i $ \KwTo $i + h - 1$}{
                    $x \leftarrow v_j$ \;
                    $y \leftarrow v_{j + h}$ \;
                    $v_{j} \leftarrow x + y$ \;
                    $v_{j + h} \leftarrow x - y$ \;
                }
            }
            \tcp*[h]{$v \leftarrow v / \sqrt{2}$, if normalize.}\;
            $h \leftarrow 2h$ \;
        }
    }
    \caption{Fast Walsh--Hadamard Transform (FWHT) Algorithm}
    \label{alg:fwht}
\end{algorithm}

\section{Basic Properties of Pauli Vector Representation}
This section reviews the basic properties of Pauli vector representation.

\subsection{Complexity of Computing Pauli Vector Representation}\label{subapp:pauli_vec_innerproduct}
Let $\rho$ be an $n$-qubit quantum state whose Pauli vector representation is given by $b_j = \Tr[P_j \rho]$ where $P_j$ is the $j$-th Pauli operator.
To compute all the elements, naively, the computational complexity scales as $\order{8^n}$ even if we use the sparse structure of each Pauli matrix.
In the following, we show that we can perform an in-place computation that exponentially reduces the time complexity: 
\begin{lemma}[Complexity of computing Pauli vector]
Given the full density matrix representation of $n$-qubit quantum state $\rho$, its Pauli vector representation can be computed with time complexity of $\order{n 4^n}.$
\end{lemma}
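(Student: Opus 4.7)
The plan is to exploit the tensor product structure of Pauli operators to design an FFT-style divide-and-conquer algorithm that mirrors the spirit of FWHT used elsewhere in the paper. Writing the first qubit in block form $\rho = \sum_{i,j\in\{0,1\}} \ketbra{i}{j}\otimes \rho_{ij}$ with each $\rho_{ij}$ of size $2^{n-1}\times 2^{n-1}$, a short computation gives
\begin{equation*}
\Tr\!\big[(P_1\otimes P_2\otimes\cdots\otimes P_n)\rho\big] \;=\; \Tr\!\big[(P_2\otimes\cdots\otimes P_n)\,\tilde{\rho}^{(P_1)}\big],\qquad \tilde{\rho}^{(P_1)}\defeq\sum_{i,j}(P_1)_{ji}\,\rho_{ij}.
\end{equation*}
Thus for each $P_1\in\{I,X,Y,Z\}$ the problem of computing the corresponding $4^{n-1}$ Pauli coefficients reduces to computing the Pauli vector of a single $(n{-}1)$-qubit matrix $\tilde{\rho}^{(P_1)}$.

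Next I would analyze the cost. Building all four reduced matrices $\tilde{\rho}^{(I)},\tilde{\rho}^{(X)},\tilde{\rho}^{(Y)},\tilde{\rho}^{(Z)}$ from $\rho$ costs $\order{4^n}$ elementary additions, since each of the $4^{n-1}$ entry positions contributes to four outputs with explicit signs $\{\pm 1,\pm i\}$ read off from the single-qubit Pauli matrices. Letting $T(n)$ denote the total time, the recursion
\begin{equation*}
T(n) = 4\,T(n-1) + \order{4^n},\qquad T(0)=\order{1},
\end{equation*}
resolves to $T(n)=\order{n\,4^n}$, matching the claim. Alternatively, one can view the transformation from density-matrix entries to Pauli coefficients as the tensor product of $n$ identical fixed $4\times 4$ single-qubit maps and execute it in $n$ sweeps of $\order{4^n}$ work each, exactly as FWHT does for $H_n$ (cf.\ Appendix~\ref{app:fwht_pseudocode}); this iterative formulation also allows an in-place implementation that uses no extra space beyond the $4^n$ entries themselves.

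The only nontrivial step is the derivation of the block-recursion formula for $\tilde{\rho}^{(P_1)}$, which is essentially a direct expansion of the trace using $\langle j|P_1|i\rangle=(P_1)_{ji}$ and the cyclic property of the trace on each subsystem; the remaining work is routine recurrence solving. I do not anticipate a real obstacle: the main care needed is bookkeeping of the complex phase $\pm i$ coming from $Y$, so that the coefficients $b_j=\Tr[P_j\rho]$ are guaranteed to be real when $\rho$ is Hermitian. After establishing the recursion and solving it, the lemma follows.
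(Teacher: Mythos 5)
Your proposal is correct and is essentially the same argument as the paper's: the paper writes the map from density-matrix entries (read off in the interleaved ``modified Choi'' ordering) to Pauli coefficients as a single tensor power $T_1^{\otimes n}$ of a fixed $4\times 4$ matrix and evaluates it with $n$ in-place FWHT-style sweeps of $\order{4^n}$ work, which is precisely the unrolled form of your block recursion $T(n)=4T(n-1)+\order{4^n}$. Your explicit derivation of the reduced matrices $\tilde{\rho}^{(P_1)}$ is a slightly more self-contained justification of the tensor-product structure that the paper simply asserts, but the algorithm and complexity analysis coincide.
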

\begin{proof}
First, let us introduce a map from an $n$-qubit density matrix to a $2n$-qubit state vector as follows:
\begin{equation*}
    \rho = \sum_{i, j} \rho_{i_1\cdots i_n, j_1\cdots j_n} \ketbra{i_1\cdots i_n}{j_1\cdots j_n} \mapsto
    \sum_{i,j} \rho_{i_1\cdots i_n,j_1\cdots j_n} \ket{i_1, j_1, \dots, i_n, j_n}.
\end{equation*}
Note that this is different from the well-known Choi map $\rho \mapsto \sum_{i,j} \rho_{i,j} \ket{i}\ket{j}$, and thus we refer to it as modified Choi vectorization. 
We introduce a modified Choi vector $\bm{c}$. Let $c_k$ denote the $k$-th element of $\bm{c}$, practically obtained via extracting the matrix elements of $\rho$ in the Z-order curve. 
Then, we find that $\bm{c}$ is related with the Pauli vector representation $\bm{b}$ as 
\begin{equation*}
    \bm{b} = T_n \bm{c} \quad \text{where} \quad
    T_n \defeq \mqty(
    1 & 0 &  0 &  1  \\
    0 & 1 &  1 &  0  \\
    0 & i & -i &  0  \\
    1 & 0 &  0 & -1
    )^{\otimes{n}}.
\end{equation*}
\black{
For example, for $n=1$,
we have
\begin{equation*}
    \rho=\mqty(\rho_{1,1}&\rho_{1,2}\\\rho_{2,1}&\rho_{2,2}), \quad
    \bm{c} = \mqty(\rho_{1,1}\\\rho_{1,2}\\\rho_{2,1}\\\rho_{2,2}), \quad
    \bm{b} = \mqty(\rho_{1,1}+\rho_{2,2}\\\rho_{1,2}+\rho_{2,1}\\i\rho_{1,2}-i\rho_{2,1}\\\rho_{1,1}-\rho_{2,2}) = \mqty(\Tr[\makebox[\widthof{Z}][c]{I}\rho]\\\Tr[\makebox[\widthof{Z}][c]{X}\rho]\\\Tr[\makebox[\widthof{Z}][c]{Y}\rho]\\\Tr[Z\rho]).
\end{equation*}
}
Similar to the FWHT algorithm as provided in Algorithm~\ref{alg:fwht}, in-place computation for such a tensor-product structure can be done with time complexity of $\order{n 4^n}$ and space complexity of $\order{4^n}$,
which completes the proof.
\end{proof}

\black{We describe the numerical comparison of our Pauli decomposition algorithm with other related studies~\cite{hantzko2023tensorized,jones2024decomposing,vidal2023paulicomposer} which are available via GitHub repository~\cite{paulidecomp}. 
The algorithms are implemented with \Cpp{} and run on the laptop. Our \Cpp{} implementation is mainly based on the Python implementation by Jones~\cite{jones2024DensePauliDecomposer}, except for the iterative algorithm by Hantzko et al.~since it is not included in the Python implementation. The algorithms run on 50 random density matrices; each entry's real and imaginary parts are independently and uniformly sampled from $[0, 1)$.}

\black{Fig.~\ref{fig:paulidecomp_benchmark} shows the benchmark of Pauli decomposition. It shows that our Pauli decomposition algorithm is the fastest. Although our algorithm and the algorithms by Hantzko et al.~both have the best time complexity $\order{n 4^n}$ among the algorithms, our algorithm is faster by a constant factor. We consider it is because our algorithm uses in-place computation, which is cache efficient.}

\begin{figure}[tb]
    \begin{center}
        \includegraphics[width=\columnwidth]{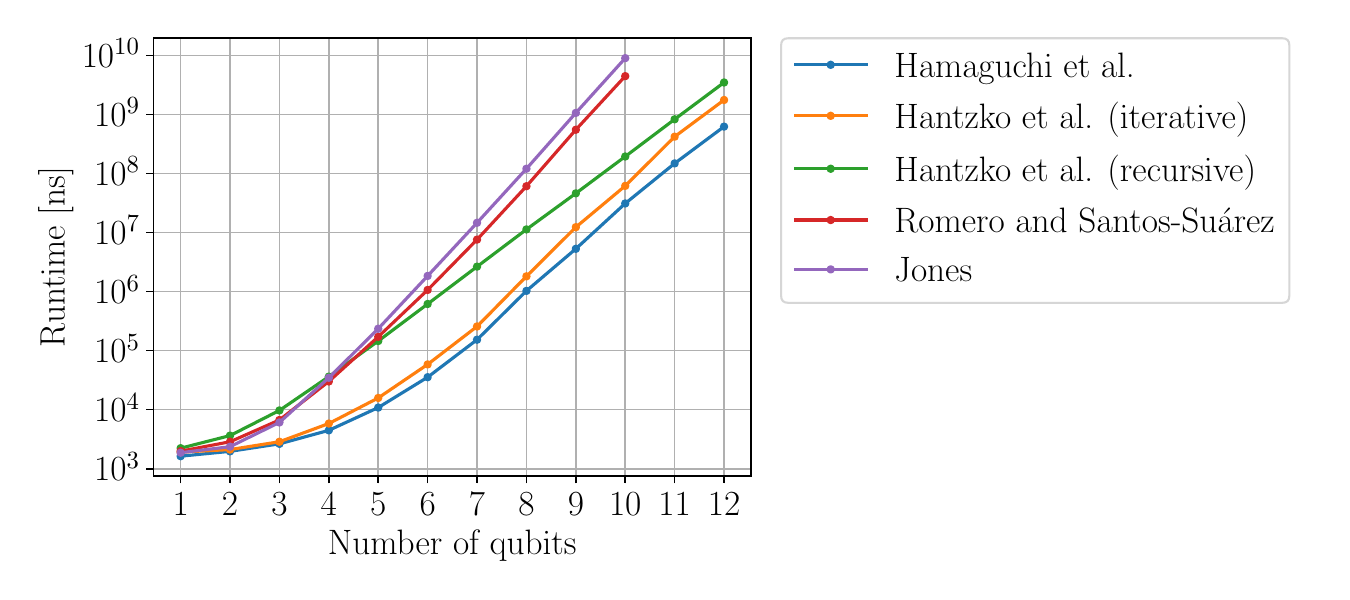}
        \caption{
        Comparison of Pauli decomposition algorithms. The algorithms are implemented with \Cpp{} and run on the laptop. The first three methods in the legend have a time complexity of $\order{n 4^n}$, while the next two methods have a time complexity of $\order{8^n}$.}
        \label{fig:paulidecomp_benchmark}
    \end{center}
\end{figure}

\subsection{Bound on Overlap Counts}
We provide a fact regarding the distribution of overlaps in addition to $0 \leq \bm{a}_j^\top \bm{b} \leq 2^n$.
\begin{lemma}[Bound on overlap counts]
    Let $\rho$ be an arbitrary $n$-qubit quantum state. Then, for all $n\in\mathbb{N}$, the count on the pure stabilizer states satisfies the following:
    \begin{equation*}
        \abs{\{ \bm{a}_j \in \calA_n \mid \bm{a}_j^\top \bm{b} \in [0,1]\}} \geq \abs{\calS_n}/2^n, \:
        \abs{\{ \bm{a}_j \in \calA_n \mid \bm{a}_j^\top \bm{b} \in [1,2^n]\}} \geq \abs{\calS_n}/2^n.
    \end{equation*} 
\end{lemma}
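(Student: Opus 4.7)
The plan is to exploit the block structure of $\Amat_n$ given by Lemma~\ref{lem:Amat_decomposition}, namely $\Amat_n = [W_1\,|\,\cdots\,|\,W_{|\calS_n|/2^n}]$, where each block $W_j \in \calW_n$ has $2^n$ columns indexed by the sign choices $\delta \in \{0,1\}^n$ for a fixed check matrix $C$. The crucial observation is that the $2^n$ stabilizer states appearing in a single block are precisely the common eigenstates of the maximal commuting Pauli group $\langle P_1,\dots,P_n\rangle$ determined by $C$, and hence they form an orthonormal basis with the resolution of identity $\sum_{\delta}\sigma_\delta = I$.

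From this I would deduce that, within each block, $\sum_{\delta} \Tr[\sigma_\delta \rho] = \Tr[\rho] = 1$. Combining this with the relation $\bm{a}_j^\top\bm{b} = 2^n\Tr[\sigma_j\rho]$ (derivable immediately from the orthogonality of Paulis, as done in the paper for stabilizer fidelity), each block's $2^n$ overlap values are non-negative, bounded above by $2^n$, and sum to exactly $2^n$. Thus their arithmetic mean within the block equals $1$.

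A simple averaging argument (min $\leq$ mean $\leq$ max) then forces the presence of at least one value in $[0,1]$ and at least one value in $[1,2^n]$ per block. Since $\Amat_n$ contains $|\calS_n|/2^n$ disjoint blocks, summing the per-block guarantees yields the claimed lower bounds on the counts.

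The only nontrivial step is verifying that the columns within a single $W_j$ really correspond to an orthonormal stabilizer basis; this is essentially the content of the proof of Lemma~\ref{lem:Amat_decomposition}, where varying $\delta$ over $\{0,1\}^n$ exhausts the $2^n$ joint eigenstates (equivalently, the $2^n$ sign-choice sectors of a fixed maximal stabilizer group). Once this identification is made, the remainder is a one-line pigeonhole on the non-negative weights of a rank-one POVM decomposition. I expect no serious obstacle; the main care needed is merely to state the bijection between the columns of $W_j$ and the eigenbasis cleanly.
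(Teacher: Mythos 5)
Your proposal is correct and follows essentially the same route as the paper: decompose $\Amat_n$ into the blocks $W_j$ of Lemma~\ref{lem:Amat_decomposition}, note that each block's $2^n$ overlaps are non-negative and sum to $2^n$, and conclude by averaging within each block. The only difference is cosmetic — you justify the block sum via the resolution of identity $\sum_\delta \sigma_\delta = I$, whereas the paper asserts it directly — so no further comparison is needed.
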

\begin{proof}
    Consider an arbitrary sparsified Walsh--Hadamard matrix $W_j$ in Lemma~\ref{lem:Amat_decomposition}, and let $\calW \subset \calA_n$ denote its columns. It follows that
    \begin{equation}\label{eq:overlap_tmp}
        \sum_{\bm{a} \in \calW} \bm{a}^\top \bm{b} = 2^n.
    \end{equation}
    Therefore, assuming that the overlaps are either entirely $\bm{a}^\top \bm{b} < 1$ or entirely $\bm{a}^\top \bm{b} > 1$ contradicts Eq.~\eqref{eq:overlap_tmp}. Applying the same reasoning for every $W_j$ completes the proof.
\end{proof}

\section{Proof of Proposition~\ref{prop:cover_set}} \label{app:proof_proposition_cover_set}
\black{In this section, we prove Proposition~\ref{prop:cover_set}.
\coverset*
Let us describe the structure of the subsequent sections.
Section~\ref{app:connection_cover_set_mubs} defines MUBs, discusses the connection between the cover set and MUBs, and provides a short proof by using the previous result by Gibbons et al.~\cite{GibbonsHW2004Discrete}. 
For convenience, the subsequent sections provide a full proof of Proposition~\ref{prop:cover_set}. The full proof shares the essence with Ref.~\cite{GibbonsHW2004Discrete}, but dispenses with concepts and discussions by Ref.~\cite{GibbonsHW2004Discrete} that are not useful for our goal.}

\subsection{Connection between Cover Set and MUBs}
\label{app:connection_cover_set_mubs}

\black{
MUBs are defined as bases of the state vector space that satisfy the following conditions: for any two different bases of MUBs, any state of one basis has the same amplitudes with respect to the other basis.
Gibbons et al.~\cite[Section 4]{GibbonsHW2004Discrete} constructs MUBs of maximal size $2^n + 1$ by using the following fact.
\begin{lemma}[Partition of Pauli Operators~{\cite[Section 4]{GibbonsHW2004Discrete}}]
    \label{lem:pauli_partition}
    The set of all non-identity $n$-qubit Pauli operators $\qty{I, X, Y, Z}^{\otimes n} \setminus \qty{I^{\otimes n}}$, which has $4^n - 1$ elements, can be partitioned into $2^n + 1$ subsets of $2^n - 1$ elements that commute with one another. 
\end{lemma}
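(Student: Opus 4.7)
The plan is to translate the partition problem into symplectic linear algebra over $\bbF_2$ and then produce the partition explicitly from the algebraic structure of the field $\bbF_{2^n}$. First, by the check-matrix representation in Eq.~\eqref{eq:binaryFormOfPi} together with Lemmas~\ref{lemma:check_matrix_independent} and~\ref{lemma:check_matrix_commutativity}, the map $P \mapsto r(P)$ induces a bijection between non-identity Pauli operators modulo overall sign and nonzero vectors of $\bbF_2^{2n}$, under which commutation corresponds to orthogonality with respect to the symplectic form $\omega(v, v') = v^{\top} J v'$, where $J$ is the block matrix appearing in Lemma~\ref{lemma:check_matrix_commutativity}. Any set of $2^n - 1$ mutually commuting non-identity Paulis therefore corresponds to the nonzero part of an $n$-dimensional totally isotropic (Lagrangian) subspace of $\bbF_2^{2n}$, so the lemma reduces to exhibiting $2^n + 1$ such subspaces that pairwise intersect only at the origin and whose union is all of $\bbF_2^{2n}$; the count $(2^n + 1)(2^n - 1) = 4^n - 1$ is consistent.

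Next, I would construct the required spread of Lagrangian subspaces via the Desarguesian construction over $\bbF_{2^n}$. Fix an $\bbF_2$-basis $\{e_1, \ldots, e_n\}$ of $\bbF_{2^n}$, and let $\{e_1^*, \ldots, e_n^*\}$ be its dual with respect to the non-degenerate trace pairing $(x, y) \mapsto \Tr(xy)$. Identifying $\bbF_2^{2n} \cong \bbF_{2^n} \oplus \bbF_{2^n}$ by using the first basis on the $X$-half and the dual basis on the $Z$-half, a short computation shows that the standard symplectic form becomes $\omega((a, b), (a', b')) = \Tr(ab' + a'b)$. For each $\alpha \in \bbF_{2^n}$ I would then define $L_\alpha = \{(a, \alpha a) : a \in \bbF_{2^n}\}$, together with $L_\infty = \{(0, b) : b \in \bbF_{2^n}\}$; each is an $\bbF_2$-subspace of dimension $n$.

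The remaining verifications are routine characteristic-$2$ bookkeeping. Each $L_\alpha$ is isotropic because $\omega((a, \alpha a), (a', \alpha a')) = \Tr(\alpha(aa' + a'a)) = \Tr(2\alpha aa') = 0$, and $L_\infty$ is trivially isotropic; distinct $L_\alpha, L_\beta$ intersect only at the origin since $\alpha a = \beta a$ with $a \ne 0$ forces $\alpha = \beta$; and every nonzero $(a, b)$ with $a \ne 0$ lies in $L_{b a^{-1}}$, while nonzero vectors of the form $(0, b)$ lie in $L_\infty$. Lifting back, choose for each nonzero vector of $L_\alpha$ any $(\pm 1)$-representative in $\{\pm I, \pm X, \pm Y, \pm Z\}^{\otimes n}$; commutativity depends only on the symplectic class, so these $2^n - 1$ operators pairwise commute, and the $2^n + 1$ lifted blocks give the desired partition of $\{I, X, Y, Z\}^{\otimes n} \setminus \{I^{\otimes n}\}$.

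The main obstacle is verifying that the trace-based expression for $\omega$ actually agrees with the symplectic form coming from Pauli commutation; this is exactly where the trace-dual basis pays off, since $\Tr(e_i e_j^*) = \delta_{ij}$ converts the coordinatewise symplectic inner product $x \cdot z' + x' \cdot z$ into the trace pairing $\Tr(ab' + a'b)$. Once that identification is in hand, everything else---dimension counting, pairwise disjointness, isotropy, and the lift back to Paulis---follows by direct calculation and the vanishing of $2$ in characteristic $2$.
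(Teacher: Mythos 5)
Your proof is correct and follows essentially the same route as the paper's self-contained argument in Appendix~C (the full proof of the cover-set proposition via the technical lemma on the symmetric matrices $X_k$): both reduce the statement to exhibiting a symplectic spread of $2^n+1$ Lagrangian subspaces of $\mathbb{F}_2^{2n}$ and build it from the field $\mathbb{F}_{2^n}$. The difference is purely presentational --- you realize the spread as graphs of multiplication-by-$\alpha$ maps made symmetric via the trace-dual basis, while the paper extracts symmetric coefficient matrices from $C(x)=\bm{x}\bm{x}^{\top}$ over $\mathbb{F}_2[x]/(f)$ and takes their $\mathbb{F}_2$-span; these give the same family of Lagrangians.
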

Let $\calT_1, \dots, \calT_{2^n + 1}$ be the subsets of commuting Pauli operators in Lemma~\ref{lem:pauli_partition}.
Ref.~\cite{GibbonsHW2004Discrete} takes a basis $B_i$ consisting of simultaneous eigenvectors of $\calT_i$ and proves that $\qty{B_1, \dots, B_{2^n + 1}}$ is MUBs of maximal size.
Also, we can construct the cover set in Proposition~\ref{prop:cover_set} by using Lemma~\ref{lem:pauli_partition}.
\begin{proof}
    Regarding $B_i$ as a column set of the matrix $\Amat_n$, the basis $B_i$ corresponds to a sparsified Walsh--Hadamard matrix $W_i$. Then, the nonzero-row set $R(W_i)$ consists of $\calT_i$ and the first row $I^{\otimes n}$ by the definition of $B_i$. Thus, Lemma~\ref{lem:pauli_partition} implies that $\calV_n = \qty{W_1, \dots, W_{2^n + 1}}$ is the desired cover set.
\end{proof}
}

\subsection{Full Proof}
\label{app:full_proof}

\subsubsection{Proof on the Existence of Minimum Cover Stabilizer Set}
\black{Next, we directly prove Proposition~\ref{prop:cover_set}.
We utilize the bijection between Pauli operators $P \in \qty{I, X, Y, Z}^{\otimes n}$ and $2n$-dimensional row vectors $r(P) \in \bbF_2^{2n}$ of check matrices. 
First, the proof of Lemma~\ref{lem:Amat_decomposition} shows that each sparsified Walsh--Hadamard matrix $W$ corresponds to a single check matrix $C = [X_n~Z_n]$. Also, the elements of the nonzero-row set $R(W)$ can be enumerated as follows: for each $\bm{v} \in \bbF_2^{n}$, take the Pauli operator $P_{\bm{v}}$ satisfying $r(P_{\bm{v}}) = \bm{v}^{\top} C = [\bm{v}^{\top} X_n~\bm{v}^{\top} Z_n]$.
Using these observations, it suffices to find a set of check matrices $\qty{C_1, \dots, C_{2^n + 1}}$ that satisfies the following condition: for every $2n$-dimensional row vector $\bm{r} \in \bbF_2^{2n}$, there exists a check matrix $C_k$ in the set and a vector $\bm{v} \in \bbF_2^n$ that satisfies $\bm{r} = \bm{v}^{\top} C_k$.
In the following, we explicitly construct the check matrices and confirm that they satisfy the condition.}

% \black{Next, we directly prove Proposition~\ref{prop:cover_set}.
% We first observe that each sparsified Walsh--Hadamard matrix corresponds to a single stabilizer state up to phases.
% Then, a set of stabilizer states $S_n$ that corresponds to $\calV_n$ satisfies the following condition: for every Pauli operator $P\in\qty{I, X, Y, Z}^{\otimes n}$ there exists $\ket{\psi}\in S_n$ that satisfies $\{P, -P\} \cap \Stab(\ket{\psi}) \neq \emptyset$.
% In the following, we explicitly construct $S_n$ with $\abs{S_n}=2^n+1$, and show that
% $S_n$ satisfies the desired property.
% Note that the overall sign of the stabilizer generators in the following argument is irrelevant and not discussed explicitly.}

\black{
Let us define the $2^n + 1$-st check matrix as $C_{2^n + 1} = [I~O]$.
Also, for each $k$ from $1$ to $2^n$, we take the $k$-th check matrix $C_k = [X_k~Z_k]$ as follows:}
\begin{enumerate}
    \item $Z_k$ is an $n \times n$ identity matrix.
    \item $X_k$ is given in the following lemma.
\end{enumerate}

\begin{lemma}\label{lemma:Xk}
    There exists an explicit construction for a set of symmetric matrices $\{X_k \mid X_k \in \bbF_2^{n\times n}\}_{k=1}^{2^n}$ such that the following is satisfied.
    \begin{equation}\label{eq:condition_of_Xk}
        \forall\bm{v}\in\bbF_2^n\setminus\{\bm{0}\},~\{X_k \bm{v} \}_{k=1}^{2^n} = \bbF_2^n.
    \end{equation}
\end{lemma}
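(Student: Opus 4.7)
The plan is to realize the required symmetric matrices as twisted multiplication operators on the finite field $\bbF_{2^n}$ viewed as an $n$-dimensional $\bbF_2$-vector space. The motivation is that multiplication by a nonzero element of $\bbF_{2^n}$ is automatically a bijection on $\bbF_{2^n}$, which would yield Eq.~\eqref{eq:condition_of_Xk} essentially for free; the only real work is to convert these multiplication operators into symmetric matrices, which I plan to do by twisting with the Gram matrix of the trace form.

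Concretely, I would first construct $\bbF_{2^n}$ explicitly by fixing a degree-$n$ irreducible polynomial over $\bbF_2$ and picking any $\bbF_2$-basis $\{e_1,\ldots,e_n\}$, for instance $1,\alpha,\ldots,\alpha^{n-1}$ for a chosen root $\alpha$. Let $T:\bbF_{2^n}\to\bbF_2$ denote the field trace, and define $B\in\bbF_2^{n\times n}$ by $B_{ij}\defeq T(e_ie_j)$; this matrix is symmetric by construction and invertible because the trace bilinear form on a separable extension is non-degenerate. For each $\beta\in\bbF_{2^n}$, let $M_\beta\in\bbF_2^{n\times n}$ be the matrix of multiplication by $\beta$ in this basis, and set $X_\beta\defeq BM_\beta$. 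Enumerating $\bbF_{2^n}=\{\beta_1,\ldots,\beta_{2^n}\}$ then produces the required family of $2^n$ matrices. Symmetry of each $X_\beta$ should be a one-line check: $(X_\beta)_{ij}=\sum_k T(e_ie_k)(M_\beta)_{kj}=T(e_i\,\beta e_j)=T(\beta e_ie_j)$, which is manifestly symmetric in $i,j$ by commutativity of multiplication in $\bbF_{2^n}$. For the covering condition, I would identify a nonzero $\bm{v}\in\bbF_2^n$ with a nonzero $v\in\bbF_{2^n}$, observe that $M_\beta\bm{v}$ corresponds to $\beta v$ so $\{M_\beta\bm{v}\}_{\beta\in\bbF_{2^n}}=\bbF_2^n$, and then invoke invertibility of $B$ to conclude $\{X_\beta\bm{v}\}_{\beta\in\bbF_{2^n}}=B\cdot\bbF_2^n=\bbF_2^n$.

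I do not foresee a substantive obstacle here; the construction is completely explicit once the irreducible polynomial is fixed. The only nontrivial input is non-degeneracy of the trace form on $\bbF_{2^n}/\bbF_2$, which is a standard consequence of separability of finite fields. An alternative route would be to work in a self-dual basis (which exists in characteristic $2$) so that $B=I$ and $X_\beta=M_\beta$ directly, at the cost of having to establish existence of such a basis separately; the twisted version above sidesteps that issue entirely.
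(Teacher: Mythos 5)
Your construction is correct, and with the same choice of basis it in fact yields the very same family of matrices as the paper's, just parameterized differently: both families consist of the matrices $\bigl[\phi(e_ie_j)\bigr]_{i,j}$ as $\phi$ ranges over all $\Fp{2}$-linear functionals on $\Fp{2^n}$ --- the paper enumerates these functionals as combinations $\sum_l a_l\,\mathrm{coeff}_l$ of the coordinate functionals with respect to the basis $\{x^0,\dots,x^{n-1}\}$ of $\Fp{2}[x]/(f)$, which produces the coefficient matrices $C_l$ of $C(x)=\bm{x}\bm{x}^\top$, whereas you enumerate them as $\gamma\mapsto T(\beta\gamma)$. The verifications, however, run along genuinely different lines. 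The paper shows that $\{C_l\bm{v}\}_{l=0}^{n-1}$ is linearly independent for every nonzero $\bm{v}$ by computing $\bm{u}^\top C(x)\bm{v}=u(x)v(x)\neq 0$ and extracting a nonzero coefficient; this is entirely self-contained and uses nothing beyond the fact that $\Fp{2}[x]/(f)$ is a field. You instead obtain the covering property almost for free from bijectivity of multiplication by a nonzero $\beta$, and then transport it through the invertible Gram matrix $B$; this is shorter and more conceptual, and it makes the symmetry a one-line consequence of $(BM_\beta)_{ij}=T(\beta e_ie_j)$, but it imports non-degeneracy of the trace form on $\Fp{2^n}/\Fp{2}$ as an external ingredient (standard, but it does require its own justification via separability). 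Either route establishes the lemma; yours trades a small amount of field-theoretic background for a cleaner covering argument, while the paper's stays elementary at the cost of a three-step chain of lemmas.
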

\black{From the fact that $X_k$ and $Z_k$ are all symmetric matrices, it follows that the condition in Lemma~\ref{lemma:check_matrix_commutativity} is satisfied, and therefore the check matrices $C_1, \dots, C_{2^n + 1}$ are valid.}

\black{We show that the constructed check matrices indeed satisfy the above condition. Let us take any $2n$-dimensional row vector $\bm{r}$. Let $\bm{x}$ and $\bm{z}$ be the former and latter half of the row vector $\bm{r}$, respectively. If $\bm{z} = \bm{0}$, then $C_{2^n + 1} = [I~O]$ and $\bm{v} = \bm{x}$ satisfy the condition. If $\bm{z} \neq 0$, Lemma~\ref{lemma:Xk} allows us to take $k$ such that $\bm{z} X_k = \bm{x}$. Using $\bm{v} = \bm{z}$ implies $\bm{v} C_k = [\bm{z} X_k~\bm{z}] = \bm{r}$.}

\subsection{Proof of Technical Lemma~\ref{lemma:Xk}}
\black{Now the remaining work is to prove Lemma~\ref{lemma:Xk}.
We first provide the explicit construction of $\{X_k\}_{k=1}^{2^n}$, and then prove that it indeed satisfies Eq.~\eqref{eq:condition_of_Xk}.}
\subsubsection{Construction of \texorpdfstring{$X_k$}{X k}}\label{subapp:construction_Xk}

We first introduce some algebraic concepts necessary for the discussion. We denote the polynomial ring over $\Fp{2}$ by $\Fp{2}[x]$. Let $f$ be an arbitrary irreducible polynomial of degree $n$. We consider a quotient ring $\Fp{2}[x] / (f)$, where $(f)$ denotes the ideal generated by $f$. Then, $\Fp{2}[x] / (f)$ is a field because $f$ is irreducible. It is also noteworthy that $\Fp{2}[x] / (f)$ is a vector space over $\Fp{2}$ and $\qty{x^0, \dots, x^{n - 1}}$ can be taken as a basis.

In what follows, we discuss the construction of $\qty{X_k}_{k=1}^{2^n}$. We define a symmetric matrix $C(x) \in (\Fp{2}[x] / (f))^{n \times n}$ as a matrix whose $(i, j)$ entry equals $x^{i + j - 2}$. Namely, $C(x)$ can be represented as follows:
\begin{equation*}
    C(x) = \mqty(
    x^0       & x^1 & x^2    & \cdots & x^{n - 1}  \\
    x^1       & x^2 &        &        &            \\
    x^2       &     & \ddots &        & \vdots     \\
    \vdots    &     &        &        &            \\
    x^{n - 1} &     & \cdots &        & x^{2n - 2} \\
    ).
\end{equation*}
Every entry of $C(x)$ can be represented as a linear combination of a basis $\qty{x^0, \dots, x^{n - 1}}$, and we define $C_i$ be a matrix consisting of such $x^i$ coefficients. In other words, $C_i \in \Fp{2}^{n \times n}$ is defined so that $C(x) = C_0 x^0 + \dots + C_{n - 1} x^{n - 1}$ holds. Note that $C_i$ is also symmetric.

We give a concrete example below. We take $n = 3$ and $f = 1 + x + x^3$, which is irreducible. Then, $C_0, C_1, C_2$ can be derived in the following way:
\begin{align*}
    C(x)
        & = \mqty(
    x^0 & x^1                  & x^2     \\
    x^1 & x^2                  & x^3     \\
    x^2 & x^3                  & x^4     \\
    )                                    \\
        & = \mqty(
    1   & x                    & x^2     \\
    x   & x^2                  & 1 + x   \\
    x^2 & 1 + x                & x + x^2 \\
    )                                    \\
        & = \underbrace{\mqty(
    1   &                      &         \\
        &                      & 1       \\
        & 1                    &         \\
        )}_{C_0} x^0
    + \underbrace{\mqty(
        & 1                    &         \\
    1   &                      & 1       \\
        & 1                    & 1       \\
        )}_{C_1} x^1
    + \underbrace{\mqty(
        &                      & 1       \\
        & 1                    &         \\
    1   &                      & 1       \\
        )}_{C_2} x^2.
\end{align*}

Next, we consider the following set of symmetric matrices:
\begin{equation*}
    \qty{\sum_{i = 0}^{n - 1} a_i C_i \mathrel{} \middle| \mathrel{} a_i \in \Fp{2}}.
\end{equation*}
Since the elements of the set are distinct, the set has $2^n$ elements. We take this set as the set $\qty{X_k}_{k=1}^{2^n}$.

\subsubsection{Proof that \texorpdfstring{$X_k$}{X k} Satisfies Lemma~\ref{lemma:Xk}}

Next, we prove that $\qty{X_k}_{k=1}^{2^n}$ given in the previous subsection satisfies Eq.~\eqref{eq:condition_of_Xk}.

By the definition of $X_k$, one can show that Eq.~\eqref{eq:condition_of_Xk} holds if and only if $\qty{C_i \bm{v}}_{i = 0}^{n - 1}$ is linearly independent for any $\bm{v} \in \Fp{2}^n \setminus \qty{\bm{0}}$. From here, we show the linear independence of $\qty{C_i \bm{v}}_{i = 0}^{n - 1}$ by proving several lemmas.

\begin{lemma} \label{lemma:Cx_nonzero}
    For any vectors $\bm{u}, \bm{v} \in \Fp{2}^n \setminus \qty{\bm{0}}$, $\bm{u}^{\top} C(x) \bm{v} \neq 0$.
\end{lemma}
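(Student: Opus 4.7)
The plan is to exploit the very special Hankel (Vandermonde-like) form of $C(x)$, which forces $\bm{u}^\top C(x) \bm{v}$ to factor as a product of two polynomials, one depending only on $\bm{u}$ and the other only on $\bm{v}$. Writing $U(x) \defeq \sum_{i=1}^n u_i x^{i-1}$ and $V(x) \defeq \sum_{j=1}^n v_j x^{j-1}$, a direct expansion using $C(x)_{i,j} = x^{i+j-2}$ gives
\begin{equation*}
    \bm{u}^\top C(x) \bm{v} \;=\; \sum_{i,j=1}^n u_i v_j \, x^{i+j-2} \;=\; U(x)\, V(x).
\end{equation*}
So the lemma reduces to showing that $U(x)V(x)\neq 0$ in $\Fp{2}[x]/(f)$.

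Next I would argue nonvanishing in the quotient. Since $\bm{u},\bm{v}\in\Fp{2}^n\setminus\{\bm{0}\}$, the polynomials $U(x)$ and $V(x)$ are nonzero elements of $\Fp{2}[x]$ of degree at most $n-1$. In particular, neither is divisible by $f$ (which has degree $n$), so their residues $[U(x)], [V(x)]$ in $\Fp{2}[x]/(f)$ are nonzero. Because $f$ is irreducible, $\Fp{2}[x]/(f)$ is a field, hence has no zero divisors, so $[U(x)]\cdot [V(x)] = [U(x)V(x)] \neq 0$. This yields $\bm{u}^\top C(x)\bm{v}\neq 0$ in $\Fp{2}[x]/(f)$, as required.

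There is no real obstacle here: the only nontrivial input is the observation that the Hankel structure $C(x)_{i,j} = x^{i+j-2}$ splits the bilinear form into a product $U(x)V(x)$, after which the irreducibility of $f$ immediately closes the argument. The lemma will then be used in the next step to deduce linear independence of $\{C_i \bm{v}\}_{i=0}^{n-1}$ for every nonzero $\bm{v}$, which is the hypothesis needed to conclude Lemma~\ref{lemma:Xk}.
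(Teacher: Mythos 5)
Your proof is correct and follows essentially the same route as the paper: both factor the bilinear form as $\bm{u}^{\top}C(x)\bm{v} = u(x)v(x)$ using the rank-one/Hankel structure $C(x)=\bm{x}\bm{x}^{\top}$ and then invoke that $\Fp{2}[x]/(f)$ is a field to rule out zero divisors. Your explicit remark that the residues are nonzero because $\deg U,\deg V \le n-1 < \deg f$ is a small but welcome addition that the paper leaves implicit.
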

\begin{proof}
Using a vector $\bm{x} = (x^0, \dots, x^{n - 1})^{\top}$, we have $C(x) = \bm{x} \bm{x}^{\top}$. Thus, by defining two polynomials $u(x) = \bm{u}^{\top} \bm{x} = \sum_{i=0}^{n - 1} u_i x^i$ and $v(x) = \bm{v}^{\top} \bm{x} = \sum_{i=0}^{n - 1} v_i x^i$, $\bm{u}^{\top} C(x) \bm{v}$ can be represented as $u(x) v(x)$. 
Hence, it suffices to show that $u(x) v(x) \neq 0$. Because $\bm{u}$ and $\bm{v}$ are nonzero, $u(x)$ and $v(x)$ are nonzero as well. Noting that $\Fp{2} / (f)$ is a field, the product $u(x) v(x)$ is also nonzero.
\end{proof}

\begin{lemma} \label{lemma:Ci_nonzero}
    For any vectors $\bm{u}, \bm{v} \in \Fp{2}^n \setminus \qty{\bm{0}}$, there exists $i$ such that $\bm{u}^{\top} C_i \bm{v} \neq 0$.
\end{lemma}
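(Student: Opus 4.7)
The plan is to derive Lemma~\ref{lemma:Ci_nonzero} as a direct corollary of Lemma~\ref{lemma:Cx_nonzero} by expanding $\bm{u}^{\top} C(x) \bm{v}$ in the basis $\{x^0, x^1, \ldots, x^{n-1}\}$ of $\Fp{2}[x]/(f)$ viewed as a vector space over $\Fp{2}$. The contrapositive viewpoint makes the argument transparent: if every scalar $\bm{u}^{\top} C_i \bm{v}$ vanishes, then the whole field element $\bm{u}^{\top} C(x) \bm{v}$ must vanish too, contradicting the previous lemma.

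Concretely, I would first recall that by construction $C(x) = \sum_{i=0}^{n-1} C_i\, x^i$, where each $C_i \in \Fp{2}^{n \times n}$ has entries in the prime field. Regarding $\bm{u}, \bm{v} \in \Fp{2}^n$ as vectors in $(\Fp{2}[x]/(f))^n$ via the canonical inclusion $\Fp{2} \hookrightarrow \Fp{2}[x]/(f)$, bilinearity of matrix multiplication yields
\begin{equation*}
    \bm{u}^{\top} C(x) \bm{v} \;=\; \sum_{i=0}^{n-1} \bigl(\bm{u}^{\top} C_i \bm{v}\bigr)\, x^i,
\end{equation*}
where each coefficient $\bm{u}^{\top} C_i \bm{v}$ is a scalar in $\Fp{2}$. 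Since $f$ is irreducible of degree $n$, the set $\{x^0, x^1, \ldots, x^{n-1}\}$ is an $\Fp{2}$-basis of $\Fp{2}[x]/(f)$, so this expression is precisely the expansion of $\bm{u}^{\top} C(x) \bm{v}$ in that basis.

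If we suppose for contradiction that $\bm{u}^{\top} C_i \bm{v} = 0$ for every $i \in \{0, 1, \ldots, n-1\}$, then the right-hand side above vanishes, giving $\bm{u}^{\top} C(x) \bm{v} = 0$. This contradicts Lemma~\ref{lemma:Cx_nonzero}, which guarantees nonvanishing for nonzero $\bm{u}, \bm{v}$. Hence at least one index $i$ must satisfy $\bm{u}^{\top} C_i \bm{v} \neq 0$.

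I do not anticipate any serious obstacle here; all the combinatorial and algebraic work has been absorbed into the construction of $C(x)$ and into Lemma~\ref{lemma:Cx_nonzero}. The only subtlety worth flagging is the basis argument — one must be clear that the coefficients $\bm{u}^{\top} C_i \bm{v} \in \Fp{2}$ really are the coordinates in the chosen basis (which requires $\deg f = n$ so that no reduction modulo $f$ collapses them), but this follows immediately from irreducibility of $f$.
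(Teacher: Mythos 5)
Your argument is correct and is essentially identical to the paper's proof: both expand $\bm{u}^{\top} C(x) \bm{v} = \sum_{i=0}^{n-1} (\bm{u}^{\top} C_i \bm{v})\, x^i$ and invoke Lemma~\ref{lemma:Cx_nonzero} to conclude that some coefficient must be nonzero. Your contrapositive phrasing and the remark about $\{x^0,\dots,x^{n-1}\}$ being an $\Fp{2}$-basis are just slightly more explicit versions of the same reasoning.
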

\begin{proof}
By multiplying $C(x) = C_0 x^0 + \dots + C_{n - 1} x^{n - 1}$ by $\bm{u}$ from the left and $\bm{v}$ from the right, we obtain
\begin{equation}\label{eq:uCxv_expansion}
    \bm{u}^{\top} C(x) \bm{v} = (\bm{u}^{\top} C_0 \bm{v}) x^0 + \dots + (\bm{u}^{\top} C_{n - 1} \bm{v}) x^{n - 1}.
\end{equation}
Eq.~\eqref{eq:uCxv_expansion} expresses $\bm{u}^{\top} C(x) \bm{v}$ as a polynomial with coefficients $\bm{u}^{\top} C_{i} \bm{v} \in \Fp{2}$. Since $\bm{u}^{\top} C(x) \bm{v}$ is nonzero from Lemma~\ref{lemma:Cx_nonzero}, there exists a nonzero coefficient, i.e, $\bm{u}^{\top} C_{i} \bm{v} \neq 0$ for some $i$.
\end{proof}

\begin{lemma} \label{lemma:Ci_independent}
    For any vector $\bm{v} \in \Fp{2}^n \setminus \qty{\bm{0}}$, $\qty{C_i \bm{v}}_{i = 0}^{n - 1}$ is linearly independent.
\end{lemma}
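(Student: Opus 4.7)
The plan is to reduce the linear independence statement to the invertibility of multiplication by $v(x)$ in the field $\Fp{2}[x]/(f)$, leveraging the same polynomial identification already used to prove Lemmas~\ref{lemma:Cx_nonzero} and~\ref{lemma:Ci_nonzero}. Concretely, I would assume, toward a contradiction, that there exist coefficients $a_0, \dots, a_{n-1} \in \Fp{2}$, not all zero, with $\sum_i a_i C_i \bm{v} = \bm{0}$. This vanishing vector equation is equivalent to $\bm{u}^{\top}(\sum_i a_i C_i)\bm{v} = 0$ for every $\bm{u} \in \Fp{2}^n$, and by linearity this reads
\begin{equation*}
    \sum_{i=0}^{n-1} a_i\bigl(\bm{u}^{\top} C_i \bm{v}\bigr) = 0 \quad \text{for all } \bm{u} \in \Fp{2}^n.
\end{equation*}

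Next I would translate the coefficients $\bm{u}^{\top} C_i \bm{v}$ into the polynomial language used earlier in the appendix. By Eq.~\eqref{eq:uCxv_expansion} (and the factorization $C(x) = \bm{x}\bm{x}^{\top}$ from the proof of Lemma~\ref{lemma:Cx_nonzero}), the tuple $\bigl(\bm{u}^{\top} C_0 \bm{v}, \dots, \bm{u}^{\top} C_{n-1} \bm{v}\bigr)$ is precisely the coefficient vector in $\qty{x^0, \dots, x^{n-1}}$ of the product $u(x)v(x) \in \Fp{2}[x]/(f)$. Since $f$ is irreducible, $\Fp{2}[x]/(f)$ is a field; because $\bm{v} \neq \bm{0}$, the polynomial $v(x)$ is a unit, so the map $u(x) \mapsto u(x)v(x)$ is a bijection on $\Fp{2}[x]/(f)$. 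Consequently, as $\bm{u}$ ranges over $\Fp{2}^n$, the coefficient vector $\bm{c}(\bm{u}) = \bigl(\bm{u}^{\top} C_i \bm{v}\bigr)_{i=0}^{n-1}$ ranges over all of $\Fp{2}^n$.

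Substituting back, the condition above becomes $\sum_i a_i c_i = 0$ for every $\bm{c} \in \Fp{2}^n$, which forces $a_0 = \dots = a_{n-1} = 0$, contradicting our assumption. Hence $\qty{C_i \bm{v}}_{i=0}^{n-1}$ is linearly independent, completing Lemma~\ref{lemma:Ci_independent} and thereby Lemma~\ref{lemma:Xk} and Proposition~\ref{prop:cover_set}.

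I expect no serious obstacle: all the heavy lifting (the polynomial identification and the use of irreducibility of $f$) has already been set up in Lemmas~\ref{lemma:Cx_nonzero} and~\ref{lemma:Ci_nonzero}. The only subtlety to state cleanly is the passage from ``$\sum a_i C_i \bm{v}$ vanishes as a vector'' to ``$\sum a_i c_i$ vanishes for every coefficient vector $\bm{c}$ arising from some $\bm{u}$,'' which relies on the surjectivity of multiplication by $v(x)$ in the field.
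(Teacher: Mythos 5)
Your proposal is correct and follows essentially the same route as the paper: both arguments reduce the claim to non-singularity of the matrix $\qty[C_0\bm{v},\dots,C_{n-1}\bm{v}]$ via the identification $\bm{u}^{\top}C(x)\bm{v}=u(x)v(x)$ and the field structure of $\Fp{2}[x]/(f)$. The only cosmetic difference is that you prove the row map $\bm{u}\mapsto(\bm{u}^{\top}C_i\bm{v})_{i=0}^{n-1}$ is surjective using invertibility of $v(x)$, whereas the paper shows its kernel is trivial by invoking Lemma~\ref{lemma:Ci_nonzero}; over a field these are equivalent for a square matrix.
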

\begin{proof}
We consider a matrix $\qty[C_0 \bm{v}, \dots, C_{n - 1} \bm{v}]$ with $C_i \bm{v}$ as the column vectors. By multiplying an arbitrary nonzero vector $\bm{u} \in \Fp{2}^n \setminus \qty{\bm{0}}$ from the left, we obtain a vector $(\bm{u}^{\top} C_0 \bm{v}, \dots, \bm{u}^{\top} C_{n - 1} \bm{v})$, which is nonzero by Lemma~\ref{lemma:Ci_nonzero}. Therefore, we can confirm that the matrix  $\qty[C_0 \bm{v}, \dots, C_{n - 1} \bm{v}]$ is non-singular, which implies the linear independence of $\qty{C_i \bm{v}}_{i = 0}^{n - 1}$.
\end{proof}

Having shown Lemma~\ref{lemma:Ci_independent}, it is also proved that $\qty{X_k}_{k=1}^{2^n}$ given in the previous subsection satisfies Eq.~\eqref{eq:condition_of_Xk}, i.e., Lemma~\ref{lemma:Xk} is proved.

\section{Proof of Proposition~\ref{prop:multiplicativity}} \label{app:proof_proposition_multi}
In this section, we prove Proposition~\ref{prop:multiplicativity}.

\OptimalityForTheRestrictedProblem*

This assures that when the stabilizer set is restricted only to the tensor product states $\bigotimes_i \Amat_{n_i} \subset \Amat_n$, the optimal solution is simply a tensor product of individual optimal solutions.
The proof can be similarly done with~\cite{jaegerDirectSumsTensor1964}.
We are considering the following primal problem:
\begin{mini*}
    {\bm{x}}{\norm{\bm{x}}_1}{\label{eq:P}}{(\mathrm{P})}
    \addConstraint
    {\qty(\bigotimes_{i=1}^{m} \bm{A}_{n_i}) \bm{x}}
    {= \bigotimes_{i=1}^{m} \bm{b}_i},
\end{mini*}
and the $i$-th primal problem is formulated as
\begin{mini*}
    {\bm{x}_i}{\norm{\bm{x}_i}_1}{\label{eq:P_i}}{(\mathrm{P}_i)}
    \addConstraint{\bm{A}_{n_i} \bm{x}_i}{= \bm{b}_i}.
\end{mini*}
The proposition asserts that if there exist optimal solutions $\bm{x}_i^*$ for every $({\rm P}_i)$, then $\bm{x}^* = \bigotimes_{i=1}^m \bm{x}_i^*$ is the optimal solution for $(\mathrm{P})$.
Let $\mathrm{OPT}$ be the optimal value for the original problem $(\mathrm{P})$.
It is clear that $\bm{x}^*$ is a feasible solution for $(\mathrm{P})$, which means $\mathrm{OPT} \leq \norm{\bm{x}^*}_1$.
The dual problem of $(\mathrm{P})$ is given as
\begin{maxi*}
    {\bm{y}}{\qty(\bigotimes_{i=1}^{m} \bm{b}_i^\top) \bm{y}}{\label{eq:D}}{(\mathrm{D})}
    \addConstraint{\norm{\qty(\bigotimes_{i=1}^{m} \bm{A}_{n_i}^\top) \bm{y}}_{\infty}}{\leq 1},
\end{maxi*}
and the $i$-th dual problem can also be formulated as
\begin{maxi*}
    {\bm{y}_i}{\bm{b}_i^\top \bm{y}_i}{\label{eq:D_i}}{(\mathrm{D}_i)}
    \addConstraint{\norm{\bm{A}_{n_i}^\top \bm{y}_i}_{\infty}}{\leq 1}.
\end{maxi*}
Now the strong duality of the problems assures that optimal solution $\bm{y}_i^*$ exists for all $(\mathrm{D}_i)$ satisfying $\bm{b}_i^\top \bm{y}_i^* = \norm{\bm{x}_i}_1$. By taking $\bm{y}^* \defeq \bigotimes_{i=1}^{m} \bm{y}_i^*$, it follows from the property of $L^\infty$ norm that
\begin{align*}
    \norm{\qty(\bigotimes_{i=1}^{m} \bm{A}_{n_i}^\top) \bm{y}^*} _{\infty} = \norm{\bigotimes_{i=1}^{m} \qty(\bm{A}_{n_i}^\top \bm{y}_i^*) } _{\infty}
    = \prod_{i=1}^{m} \norm{\bm{A}_{n_i}^\top\bm{y}_i^* } _{\infty} \leq 1,
\end{align*}
which guarantees that $\bm{y}^*$ is a feasible solution of $(\mathrm{D})$.
Thus, the optimal value of the dual problem $(\mathrm{D})$, which is $\mathrm{OPT}$ by the strong duality, is no less than
\begin{equation*}
     \qty(\bigotimes_{i=1}^{m} \bm{b}_i^\top) \bm{y}^* = \prod_{i=1}^{m} \qty(\bm{b}_i^\top\bm{y}_i^*)
    = \prod_{i=1}^{m} \norm{\bm{x}_i^*}_1 
    = \norm{\bigotimes_{i=1}^{m} \bm{x}_i^*}_1
    = \norm{\bm{x}^*}_1,
\end{equation*}
which means $\norm{\bm{x}^*}_1 \leq \mathrm{OPT}$, i.e., $\norm{\bm{x}^*}_1 = \mathrm{OPT}$. This completes the proof of Proposition~\ref{prop:multiplicativity}.

\section{Numerical Details on RoM Calculation}
In this section, we provide details on the numerical results of RoM calculation.

\subsection{More Results on Top-Overlap Method} \label{app:numerical_details}
Here, we provide the results of a numerical experiment results regarding the top-overlap method introduced in Sec.~\ref{subsec:top_K}.
We present results for Haar random mixed, pure, and tensor product states of $n=4,5,6,7$ qubit system.

As shown in Fig.~\ref{fig:wrapper_of_RoM_dot_results}, we can see that the top-overlap method significantly outperforms a naive random selection method in all cases.
The improvement becomes more evident when we compute larger systems; in particular, for $n=7$ qubit case, it suffices to take only $K=10^{-5}$ to reach near-optimal value for all three targets.

Two remarks are in order. First, we added the column set of the cover matrix in Sec.~\ref{subsec:minimal} to ensure feasibility. 
In other words, the restriction of the column set solely using the information of overlaps may lead to a rank-deficient matrix.
Second, the run time of our algorithm for a $n=6$ qubit system was 5 seconds for computing all the overlaps and 3 minutes for solving the LP. For a $n=7$ qubit system, the overlap computation consumes 15 minutes at most and 15 minutes for solving the LP.

\subsection{Assuring feasibility}\label{subsec:AssureFeasibility}
As mentioned in the previous subsection, we cannot obtain any feasible solution if the stabilizer states are not appropriately restricted. 
One of the most robust ways to ensure the feasibility for arbitrary quantum states is to utilize the cover matrix 
as mentioned before.
The second method applicable for tensor product states is to utilize the solution obtained from small-scale systems. As shown in Appendix~\ref{app:proof_proposition_multi}, we can always obtain a feasible solution by taking tensor products, which can be used as an initial solution. 
Then, one may extend the set of stabilizers to improve the quality of the solution.

\begin{figure}[hb]
\begin{center}
\includegraphics[width=\columnwidth]{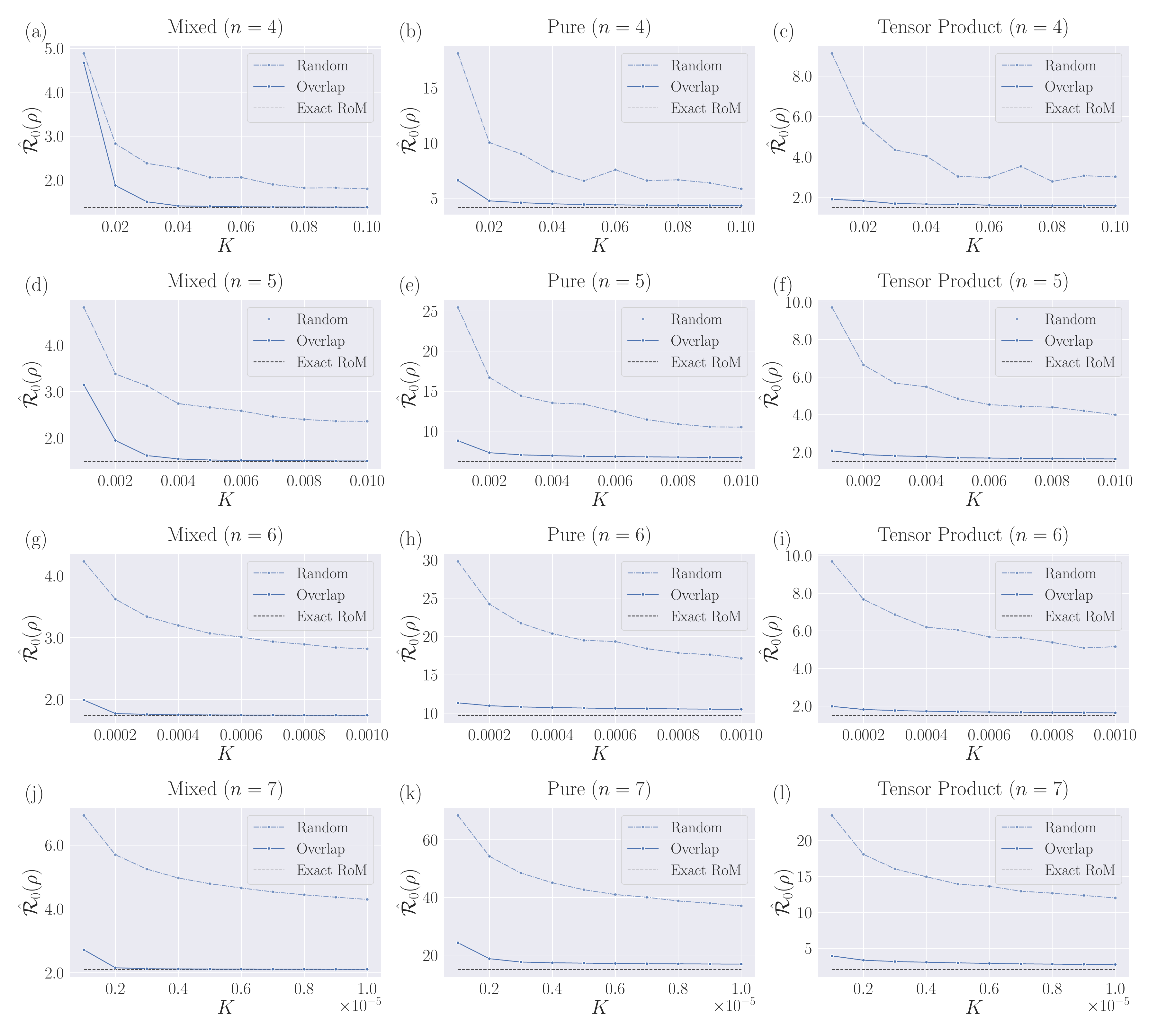}
\captionsetup{justification=centering}
    \caption{
        Numerical demonstration of top-overlap method introduced in Sec.~\ref{subsec:top_K}.
        The column set is restricted to the size of $K \abs{\calS_n}~(0 < K \leq 1)$.
        The cyan and blue lines denote the approximate RoM $\hat{\calR}_k(\rho)$ computed with a randomly restricted column set or the columns with the largest or smallest overlaps, respectively. The black dotted lines indicate the exact RoM computed with the CG method.
    }
    \label{fig:wrapper_of_RoM_dot_results}
\end{center}
\end{figure}

\end{document}